\theoremstyle{definition}
\newtheorem{definition}{Definition}
\newtheorem{theorem}{Theorem}
\newtheorem{proposition}{Proposition}
\newtheorem{lemma}{Lemma}
\newtheorem{corollary}{Corollary}
\newtheorem{assumption}{Assumption}
\theoremstyle{remark}
\newtheorem*{remark}{Remark}
\newcommand{\agentset}{\mathcal{N}}
\newcommand{\edgeset}{\mathcal{E}}
\newcommand{\actionset}[1]{S_{#1}}
\newcommand{\NE}{\mathbf{\bar{x}}}
\newcommand{\A}{\mathbf{A}}
\newcommand{\B}{\mathbf{B}}
\newcommand{\x}{\mathbf{x}}
\newcommand{\y}{\mathbf{y}}
\newcommand{\w}{\mathbf{w}}
\newcommand{\R}{\mathbb{R}}
\title{Asymptotic Convergence and Performance of Multi-Agent Q-learning Dynamics}
\author{
  Aamal Abbas Hussain\\
    \texttt{aamal.hussain15@imperial.ac.uk}
  \and
  Francesco Belardinelli\\
  \texttt{francesco.belardinelli@imperial.ac.uk}
  \and 
  Georgios Piliouras\\
  \texttt{georgios@sutd.edu.sg}
}
\date{}
\begin{document}
    \maketitle

    \begin{abstract}
Achieving convergence of multiple learning agents in general $N$-player games
is imperative for the development of safe and reliable machine learning (ML) algorithms and their application to
autonomous systems. Yet it is known that, outside the bounds of simple two-player
games, convergence cannot be taken for granted.

To make progress in resolving this problem, we study the dynamics of smooth Q-Learning, a popular reinforcement learning algorithm which quantifies the tendency for learning agents to explore their state space or exploit their payoffs. We show a sufficient condition on the rate of exploration such that the Q-Learning dynamics is guaranteed to converge to a unique equilibrium in any game. We connect this result to games for which Q-Learning is known to converge with arbitrary exploration rates, including weighted Potential games and weighted zero sum polymatrix games.

Finally, we examine the performance of the Q-Learning dynamic as measured by the Time Averaged Social Welfare, and comparing this with the Social Welfare achieved by the equilibrium. We provide a sufficient condition whereby the Q-Learning dynamic will outperform the equilibrium even if the dynamics do not converge.
    \end{abstract}

    \section{Introduction}

    Understanding the behaviour of multi-agent learning systems has been a hallmark problem of game theory and online
    learning. The requirement is that agents must explore potentially sub-optimal decisions whilst interacting with
    other agents to ultimately maximise their long-term reward. In contrast to online learning with a single agent,
    this poses a fundamentally non-stationary problem, in which convergence to an equilibrium is not always
    guaranteed.

    In fact, recent work has consistently found that, when learning on games, agents may present a wide array of
    behaviours. This includes cycles \cite{piliouras:cycles,piliouras:poincare,galla:cycles}, and even chaos \cite
    {sato:qlearning,sato:rps,galla:complex,sanders:chaos,piliouras:arbitrarilycomplex}. Furthermore, the equilibria of
    a game need not be unique, so even if convergence is guaranteed, it may be to one of many (or even a continuum) of
    equilibria.  Thus, predicting the behaviour of online learning in games with many players becomes a particularly
    challenging problem.
    
    Yet it remains an important problem to solve. Recent advances in machine learning require training multiple neural
    networks for applications to generative models \cite{che:gan,hoang:mgan}. In order for such applications to be
    realised, it is required that that training provably converges to an equilibrium. Of equal importance is that this
    equilibrium be unique, so that the outcome remains consistent regardless of initial conditions. 
    A unique equilibrium guarantees not only the reproducibility of the system, but also ensures that desired behaviours will persist, even if the system is perturbed from its desired state. 
    Similarly, the most complex tasks often require the interaction of multiple autonomous agents \cite{hamann:swarm}. This again
    requires that agents are able to reliably equilibrate their behaviour.
    
    There is a strong, and ongoing, effort in the research community to understand these learning behaviours, with
    positive convergence results being found in an assortment of game structures. For instance, games with two players
    and two actions are well understood \cite{pangallo:taxonomy,metrick:fp,galstyan:2x2}. Beyond this, some of the most
    widely studied games are potential games, in which agents collaborate to maximise a shared global function, and
    zero sum games (and its network variants), in which agents are in competition. Indeed, it has been found that a
    number of learning algorithms, including Fictitious Play \cite{shoham:mas}, Q-Learning \cite
    {sutton:barto,tuyls:qlearning,sato:qlearning},  Replicator Dynamics \cite
    {smith:replicator,hofbauer:egd,hofbauer:book} all converge to equilibria (though not always unique) in potential
    games \cite{harris:fp,piliouras:potential}. In zero sum games, the former two converge to a unique fixed
    point \cite{ewerhart:fp,piliouras:zerosum}, whereas the latter is known to cycle, always maintaining its distance
    from the equilibrium \cite{piliouras:cycles}.
    
    Outside of this class of games, however, the story becomes much more complicated.  A wide array of results show
    learning algorithms may be chaotic in even the simplest games \cite{sato:rps}. Such complex behaviours become even
    more pronounced as the number of agents in the game increases \cite{sanders:chaos}. However, they are also
    influenced by the structure of the game and the parameters of the learning algorithm \cite{pangallo:bestreply}. This dichotomy between the range of possible learning behaviours and the convergence
    requirement of the applications motivate our central question:
\begin{center}
        \textit{Are there learning dynamics such that convergence to a unique equilibrium is guaranteed in any game?}
\end{center}

    \paragraph*{Main Contribution.} To answer this, we study the \emph{(smooth) Q-Learning} dynamics, a popular online
     learning model which captures the behaviour of agents who must balance their tendency to explore their strategy
     space against exploiting their payoffs. 

    In the first of our contributions, we answer the above question in a positive manner. Namely, we show that, through
    sufficient exploration, agents engaged in any game can reach a unique equilibrium. We parameterise the amount of
    exploration required in terms of the size of the game and the number of players. We then revisit previously
    established results of smooth Q-Learning and show that convergence of the algorithm in weighted potential and weighted
    network zero sum games both follow as special cases of our main result. 

   To qualify our convergence result, we consider the \emph{payoff performance} of learning dynamics in terms of the sum
   of payoffs received by all agents. We provide a condition whereby learning dynamics that do not converge can
   outperform the equilibrium. In such a case, it may be beneficial to assume no exploration on the parts of the
   agents. 
   
   This result is supported by our experiments in which examples of games are considered where payoff
   performance degrades as agents are pushed towards an equilibrium through exploration.
    
    \paragraph{Related Work.} Our work applies the framework of monotone games, which encompass a number of classes of
     games, including potential \cite{parise:network} and network zero sum games \cite{kadan:exponential}. Indeed
     recent work has considered the question of designing games which are monotone \cite{tatarenko:monotone}. Most prominent in this is
     the study of network games \cite{melo:network}. These works relate the monotonicity of the game with properties of
     the network. From a design perspective, this makes the study of monotone games rather attractive, as it becomes
     possible to make any game, be it co-operative or competitive, monotone. This fact is also exploited in \cite
     {melo:qre}, in which it was shown how any game, regardless of its structure, can be made monotone by appropriate
     parameter tuning.
    
    In addition, monotone games have been used to design online learning algorithms that 
    converge to a Nash Equilibrium \cite{facchinei:VI, mertikopoulos:concave}. However, many of these require the
    gradient (or estimates) of their cost functions at each step. Ideally, an online learning algorithm should require
    only obtained rewards at each time step. To resolve this, \cite{tatarenko:monotone} derive a distributed algorithm
    which converges to a Nash Equilibrium in monotone games. These results are advanced in \cite
    {tatarenko:nonstrict} in which an algorithm is developed which converges, whilst also achieving no-regret; this was
    the first instance of a no-regret algorithm who could provably converge in a monotone game. In a similar
    manner, \cite{parise:network} show the convergence of the Best Response Algorithm in a class of network games which
    satisfy the monotonicity property. 

    Our work departs from the above by considering Q-Learning, and by lifting strong technical assumptions on the payoff
    functions. Specifically, we do not assume a form of the payoffs as in \cite{parise:network}, or the growth of the
    function as in \cite{tatarenko:nonstrict}. In addition, we require no knowledge of the cost gradients as in \cite
    {mertikopoulos:robust}. Finally, our work considers the generalised class of \emph{weighted} monotone games, rather
    than the unweighted case considered by the above. This class of games is also considered in \cite
    {mertikopoulos:power-management,mertikopoulos:gradient-free}, in which variations of online gradient descent are
    analyzed. However, the former requires weighted strong monotonicity (which is much more restrictive even than
    strict monotonicity) and the latter requires strong assumptions on the parameters of the learning algorithm.
    
    Our work also touches upon the Follow the Regularised Leader (FTRL) dynamic. The strongest convergence result
    regarding this dynamic in continuous time is a negative one: FTRL does not converge to a Mixed Nash
    Equilibrium \cite{flokas:donotmix}, or to any equilibrium in zero sum games \cite{piliouras:cycles}. The strongest
    positive result, and the one most similar to our own is \cite{mertikopoulos:finite} in which it is shown that FTRL
    converges in unweighted strictly monotone games, a more restrictive class that that analysed here. Other positive
    results regard FTRLs local convergence to a strict Nash Equilibrium \cite{mertikopoulos:reinforcement} and its
    convergence in time average \cite{mertikopoulos:finite}. 
    
    Our paper is structured as follows. We begin in Section \ref{sec::Prelims} by outlining the setting and tools through which we analyse convergence of learning. Section \ref{sec::Results} proceeds with our main results, namely that convergence of Q-Learning dynamics can be achieved through sufficient exploration (Theorem \ref{thm::StabilityUnique}), and that non-convergent learning dynamics can outperform the equilibrium (Theorem \ref{thm::PayoffPerformance}). Our experiments in Section \ref{sec::Experiments} validate these results by showing examples of games in which convergence occurs through increased exploration, although at the cost of decreasing payoff across all agents.

  \section{Preliminaries} \label{sec::Prelims}

    In this section, we expand on the necessary background for our main results; specifically how the game model is set
    up, the learning dynamics of interest, and the techniques used in our analysis.

   \subsection{Game Model} \label{sec::GameModel}

    In our study, we consider a game $\Gamma = (\agentset, (\actionset{k}, u_k)_{k \in \agentset})$, where $\agentset$
    denotes a finite set of agents indexed by $k = 1, \ldots, N$. Each agent $k \in \agentset$ is equipped with a
    finite set of actions denoted by $\actionset{k}$ with the number of actions $n_k := |\actionset{k}|$ as well as a payoff function $u_k$. We denote a \emph{mixed strategy} (hereafter just \emph{strategy}) $\x_k$
     of an agent $k$ as a probability vector over its actions. Then, the set of all strategies of agent $k$ is
     $\Delta_k := \left\{ \x_k \in \R^{n_k} \, : \, \sum_i x_{ki} = 1, x_{ki} \geq 0 \right\}$ on which act the payoff
     functions $u_k \, : \Delta_k \times \Delta_{-k} \rightarrow \R$. We denote by $\x := (\x_k)_
     {k \in \agentset} \in \Delta = \times_k \Delta_k$ the \emph{joint strategy} of all agents and, for any $k$, $\x_
     {-k} := (\x_l)_{l \in \agentset \backslash \{k\}}  \in \Delta_{-k}$ the joint strategy of all agents other than
     $k$. 

    For any $\x \in \Delta$, we define the reward to agent $k$ for playing action $i \in \actionset{k}$ as $r_{ki}
    (\x) := \frac{\partial u_{ki}(\x)}{\partial x_{ki}}$. We write $r_k(\x) = (r_{ki}(\x))_{i \in \actionset{k}}$ as
    the concatenation of all rewards to agent $k$. Using this notation we can write $u_k
    (\x) = \langle \x_k, r_k(\x) \rangle$ in which $\langle \x, \y \rangle = \x^\top \y$ denotes the standard inner
    product in $\R^n$. With this in mind, we define the \emph{Equilibrium} of a game.
    \begin{definition}[Equilibrium] \label{def::NE} A joint mixed strategy $\NE \in \Delta$ is an \emph{Equilibrium} if,
     for all agents $k$ and all $\x_k \in \Delta_k$
        \begin{equation} \label{eqn::NE}
            \langle \x_k, r_k(\NE) \rangle \leq \langle \NE_k, r_k(\NE) \rangle
        \end{equation}
    \end{definition} $\NE$ is a \emph{strict equilibrium} if the inequality (\ref{eqn::NE}) is strict for all
     $\x_k \neq \NE_k$. 
    

    The equilibrium concept which we are primarily interested in with regards to Q-Learning is the Quantal Response Equilibrium (QRE) \cite{camerer:bgt}, which acts as an equilibrium concept when agents have bounded rationality. 

    \begin{definition}[Quantal Response Equilibrium (QRE)] A joint mixed strategy $\NE \in \Delta$ is a \emph
     {Quantal Response Equilibrium} (QRE) if, for all agents $k$ and all actions $i \in \actionset{k}$
        \begin{equation*}
            \NE_{ki} = \frac{\exp(r_{ki}(\NE_{-k})/T_k)}{\sum_{j \in \actionset{k}} \exp(r_{kj}(\NE_{-k})/T_k)}
        \end{equation*}
    \end{definition}
    where $T_k$ denotes the \emph
 {exploration rate} of the agent: low values of $T_k$ correspond to a higher tendency to exploit the best performing action. This, purely rational behaviour, corresponds to the \emph{Nash Equilibrium} in the limit $T_k \rightarrow 0$.  Higher values of $T_k$, meanwhile, implies a higher exploration rate. The two equilibria concepts are
 related by the following results

    \begin{proposition}[\cite{melo:qre} Proposition 2] Consider a game $\Gamma = (\agentset, (\actionset{k}, u_k)_
     {k \in \agentset})$ and take any         $T_k > 0$. Define the \emph{perturbed game} $\Gamma^H = (\agentset,
     (\actionset{k}, u^H_k{k \in \agentset})$ with the modified payoffs
        \begin{equation*} u_k^H(\x_k, \x_{-k}) = \langle \x_k, r_k(\x_
         {-k}) \rangle - T_k \langle \x_k, \ln \x_k \rangle
        \end{equation*}
        Then $\NE \in \Delta$ is a QRE of $\Gamma$ if and only if it is an NE of $\Gamma^H$.
    \end{proposition}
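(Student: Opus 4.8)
The plan is to reduce the Nash-equilibrium condition for $\Gamma^H$ to a collection of single-agent concave programs, one per agent, and to show that the unique solution of each coincides exactly with the softmax expression in the QRE definition. By Definition \ref{def::NE} applied to $\Gamma^H$, a profile $\NE$ is an NE of $\Gamma^H$ if and only if, for every agent $k$, the strategy $\NE_k$ maximises $u_k^H(\cdot, \NE_{-k})$ over $\Delta_k$. The key observation is that, because the mixed utility $u_k$ is linear in $\x_k$, the reward $r_k(\NE_{-k})$ depends only on the opponents' strategies and is therefore a fixed vector once $\NE_{-k}$ is held constant. Consequently the map $\x_k \mapsto \langle \x_k, r_k(\NE_{-k}) \rangle$ is linear, and each agent's objective is a linear functional minus $T_k$ times the negative entropy $\langle \x_k, \ln \x_k \rangle$.

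Next I would exploit the strict convexity of the negative entropy. Since $T_k > 0$, the objective $u_k^H(\cdot, \NE_{-k})$ is strictly concave on $\Delta_k$, so it admits a unique maximiser and its first-order (KKT) conditions are simultaneously necessary and sufficient. I would introduce a single Lagrange multiplier $\lambda_k$ for the normalisation constraint $\sum_i x_{ki} = 1$, differentiate the Lagrangian in $x_{ki}$, and obtain the stationarity condition $r_{ki}(\NE_{-k}) - T_k(\ln x_{ki} + 1) = \lambda_k$. Solving for $x_{ki}$ gives $x_{ki} \propto \exp(r_{ki}(\NE_{-k})/T_k)$, and imposing the normalisation constraint fixes the constant of proportionality to be $\sum_{j \in \actionset{k}} \exp(r_{kj}(\NE_{-k})/T_k)$, which is precisely the QRE expression. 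Reading this derivation in both directions yields the claimed equivalence.

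The main subtlety to handle carefully is the treatment of the nonnegativity constraints $x_{ki} \geq 0$. I expect the argument to be that the barrier behaviour of the entropy forces the maximiser into the relative interior of $\Delta_k$: as $x_{ki} \to 0^+$ the partial derivative of $-T_k \langle \x_k, \ln \x_k \rangle$ diverges, so no boundary point can be optimal. With interiority established, the inequality multipliers vanish and the stationarity condition holds coordinatewise, while strict concavity upgrades these first-order conditions from merely necessary to sufficient. It is exactly this sufficiency that delivers the full ``if and only if'' rather than a one-directional implication, so I would take care to state the concavity argument explicitly rather than relying on the Lagrangian computation alone.
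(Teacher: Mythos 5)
Your proposal is correct: the paper itself imports this result from \cite{melo:qre} without proof, and your argument --- strict concavity of the entropy-perturbed objective, interiority forced by the logarithmic barrier, and KKT stationarity yielding the softmax form, with concavity upgrading necessity to sufficiency for the ``if and only if'' --- is exactly the standard derivation underlying the cited proposition. No gaps; the treatment of the nonnegativity constraints and of both directions of the equivalence is handled properly.
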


    Finally, in the application of our results (specifically Lemma \ref*{thm::StabilityUnique}), we will consider
    the \emph{influence bound} of a game, which gives a notion of its \emph{size}. Formally, we apply the definition
    from \cite{melo:qre}
    
    \begin{definition}[Influence Bound] Consider a finite normal form game $\Gamma$, the \emph{influence bound} $\delta$
     is given by
        \begin{equation*}
            \delta = \max_{k \in \mathcal{N}, i \in S_k, s_{-k}, \Tilde{s}_{-k} \in S_{-k}} \{ |r_{ki}(s_{-k}) - r_{ki}
             ( \Tilde{s}_{-k})| \}
        \end{equation*}
        where the pure strategies $s_{-k}, \Tilde{s}_{-k} \in S_{-k}$ differ only in the strategies of one agent $l \neq
        k$.

    \end{definition}

    Since $|r_{ki}(s_{-k}) - r_{ki}( \Tilde{s}_{-k})|$ measures the change in reward to agent $k$ for playing action $i$
    due to a change the other players' actions, the influence bound $\delta$ can be thought of as a measure of the
    maximum influence (in terms of reward) that any agent could receive from their opponents. In Section \ref{sec::Experiments}, we consider two games and show how their influence bound can be readily determined.

    \subsection{Learning Model}

    The principal multi agent learning model that we analyse in this study is the \emph{smooth Q-Learning} (QL) dynamic \cite
    {tuyls:qlearning} which is foundational in economics \cite{camerer:ewa} and artificial intelligence \cite
    {sutton:barto}. In \cite{tuyls:qlearning,sato:qlearning} a continuous time approximation of the Q-Learning
    algorithm was found which accurately captures the behaviour of learning agents through the following ODE:
        \begin{equation} \label{eqn::QLDynamics}
            \frac{\dot{x}_{ki}}{x_{ki}}=r_{ki}\left(\x\right)-\left\langle \x_{k}, r_{k}(\x) \right\rangle - T_{k}\left
             (\ln x_{ki}-\left\langle \x_{k}, \ln \x_{k}\right\rangle \right)
        \end{equation}
  A full review of this
 dynamic is beyond the scope of this study, however a derivation of (\ref{eqn::QLDynamics}) can be found in \cite
 {piliouras:zerosum} as well as the result that fixed points of the Q-Learning dynamic correspond to the QRE of the
 game $\Gamma$. 
        
        It was shown in \cite{piliouras:potential} that the transformation between the game $\Gamma$ and the perturbed
        game $\Gamma^H$ as described in Sec.~\ref{sec::GameModel} also relates the Q-Learning Dynamics to the the well
        studied \emph{replicator dynamics} (RD) \cite{smith:replicator,hofbauer:egd}, in the following manner.
         
        \begin{lemma}[\cite{piliouras:potential}]\label{lem::QLRD} Consider the game $\Gamma = (\agentset, (\actionset
         {k}, u_k)_{k \in \agentset})$ and some $T_k > 0$. Then the Q-Learning dynamics (\ref*{eqn::QLDynamics}) can be
         written as
            \begin{align}
                \frac{\dot{x}_{ki}}{x_{ki}}&=r^H_{ki}\left(\x\right)-\left\langle \x_{k}, r^H_{k}(\x) \right\rangle 
            \end{align} where $r^H_{ki}\left(\x\right) = r_{ki}(\x) - T_k\,(\ln x_{ki} + 1)$. In particular, (\ref*
             {eqn::QLDynamics}) corresponds to RD in the perturbed game $\Gamma^H = (\agentset, (\actionset{k}, u_k^H)_
             {k \in \agentset})$.
        \end{lemma}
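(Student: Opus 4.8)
The plan is to prove both claims by direct algebraic verification, since the two displayed velocity fields are really the same object written in different notation. First I would substitute the definition $r^H_{ki}(\x) = r_{ki}(\x) - T_k(\ln x_{ki} + 1)$ into the claimed replicator form and expand the population-average term:
\begin{equation*}
\langle \x_k, r^H_k(\x) \rangle = \sum_{j} x_{kj}\bigl(r_{kj}(\x) - T_k(\ln x_{kj} + 1)\bigr) = \langle \x_k, r_k(\x) \rangle - T_k\bigl(\langle \x_k, \ln \x_k \rangle + 1\bigr),
\end{equation*}
where the additive constant $+1$ survives precisely because $\sum_j x_{kj} = 1$, as $\x_k$ is a probability vector.

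Subtracting this average from $r^H_{ki}(\x)$, I expect the two $\pm T_k$ constants to cancel, leaving
\begin{equation*}
r^H_{ki}(\x) - \langle \x_k, r^H_k(\x) \rangle = r_{ki}(\x) - \langle \x_k, r_k(\x) \rangle - T_k\bigl(\ln x_{ki} - \langle \x_k, \ln \x_k \rangle\bigr),
\end{equation*}
which is exactly the right-hand side of the Q-Learning ODE (\ref{eqn::QLDynamics}). This establishes the first displayed equality.

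To justify the final sentence, that (\ref{eqn::QLDynamics}) \emph{is} the replicator dynamics in $\Gamma^H$, I would confirm that $r^H_{ki}$ is genuinely the reward (payoff gradient) of the perturbed game. Differentiating the perturbed utility $u_k^H(\x_k, \x_{-k}) = \langle \x_k, r_k(\x_{-k}) \rangle - T_k \langle \x_k, \ln \x_k \rangle$ with respect to $x_{ki}$ yields $r_{ki}(\x_{-k}) - T_k(\ln x_{ki} + 1)$, matching the definition of $r^H_{ki}$. Since the replicator dynamics for any game with per-action reward $r^H$ take exactly the form $\dot{x}_{ki}/x_{ki} = r^H_{ki}(\x) - \langle \x_k, r^H_k(\x) \rangle$, the identification follows immediately.

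The only real obstacle is bookkeeping: one must track the normalisation $\sum_j x_{kj} = 1$ so that the index-independent constant $T_k$ introduced by the ``$+1$'' term cancels between $r^H_{ki}$ and its $\x_k$-average. Because this constant does not depend on the action index $i$, it contributes nothing to the difference, which is why the entropy term $-T_k(\ln x_{ki} - \langle \x_k, \ln \x_k \rangle)$ reappears intact. No deeper difficulty is expected; at its core the lemma is the observation that shifting every agent's per-action reward by an index-independent constant leaves the replicator vector field unchanged.
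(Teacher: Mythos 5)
Your proof is correct. The paper never proves this lemma itself — it is imported directly from \cite{piliouras:potential} (Lemma 3.1 there) — and your verification supplies exactly the computation behind the cited result: the index-independent constant $T_k$ coming from the ``$+1$'' term cancels against its own $\x_k$-average because $\sum_j x_{kj} = 1$, and differentiating $u_k^H$ confirms that $r^H_{ki}$ is the payoff gradient of the perturbed game, so the Q-Learning field and the replicator field in $\Gamma^H$ coincide term by term.
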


        \paragraph{Follow the Regularised Leader} It is known \cite{mertikopoulos:reinforcement,flokas:donotmix} that RD
         is derived as a particular instance of the \emph{Follow the Regularised Leader} (FTRL) dynamic \cite
         {shalev:online}. In essence, FTRL requires that the agents maximise their cumulative payoff up to the current
         time $t$. However, it also imposes a regularisation on the agents' actions which softens the $\arg\max$
         function. More formally we have that for every agent $k$,
        \begin{eqnarray} \label{eqn::FTRL}
            \y_k(t) & = & \y_k(0) + \int_{0}^t \mathbf{r}_k(\x(s)) \, ds \nonumber \\
            \x_k(t) & = & Q_k(\y_t) := \arg\max_{\x_k \in \Delta_k} \left\{ \langle \x_k, \y_k \rangle - h_k
             (\x_k) \right\} 
        \end{eqnarray}
        To make RD compatible with FTRL, we make the following assumption on the regularisers
        \begin{assumption} \label{ass::regulariser} For every agent $k$, the regulariser $h_k$ is:
            \begin{enumerate}
                \item Continuously differentiable with differential $\nabla h_k$ which itself is Lipschitz on
                 $\Delta_k$, with constant $L_k$.
                \item \textit{Steep} in that $||\nabla h_k(\x)|| \rightarrow \infty$ as
                 $\x \rightarrow \partial \Delta$ (c.f. \cite{flokas:donotmix}).
                \item Strongly convex on $\Delta_k$, i.e. there exists $\kappa$ such that, for any
                 $\x_k, \y_k \in \Delta_k$
            \begin{equation*}
                \langle \nabla h_k(\y_k), \x_k - \y_k \rangle \leq h_k(\y_k) - h_k(\x_k) - \frac{\kappa}
                 {2} ||\x_k - \y_k||^2
            \end{equation*}
            \end{enumerate}
        \end{assumption}
        
%
         The choice of $h_k$ and $r_k$ will, of course, depend on the application scenario. Our interest in this work is
         to consider the case $h_k(x_k) = \sum_i x_{ki}\, \ln x_{ki}$, which satisfies Assumption \ref
         {ass::regulariser}, and from which RD is derived. As mentioned, smooth Q-Learning describes RD in the
         perturbed game $\Gamma^H$. This perturbation will be instrumental in proving our results on convergence.

    \subsection{Variational Inequalities and Game Theory}

    In this work we will examine game-theoretic concepts through the lens of Variational Inequalities \cite
    {facchinei:VI}. This branch of research modifies the problem of finding a Equilibrium of a game to that of finding
    a solution to a variational inequality, which is defined as follows.

    \begin{definition}[Variational Inequalities] Consider a set $\mathcal{X} \subset \R^d$ and a map $F \, : \mathcal
     {X} \rightarrow \R^d$. The Variational Inequality $VI(\mathcal{X}, F)$ is given as
            \begin{equation}\label{eqn::VIdef}
                \langle \x - \NE, F(\NE) \rangle \geq 0, \hspace{0.5cm} \text{ for all } \x \in \mathcal{X}.
            \end{equation}
        We say that $\NE \in \mathcal{X}$ belongs to the set of solutions to a variational inequality $VI(\mathcal
        {X}, F)$ if it satisfies (\ref{eqn::VIdef}).
        
    \end{definition}
    
    In this work, we will be considering the state space $\mathcal{X} = \Delta = \times_k \Delta_k$ alonside the map
    $F \, : \Delta \rightarrow \R^{Nn}$ defined as 
        \begin{equation*} F(\x) = \left( F_k(\x) \right)_{k \in \agentset} = \left( - r_k(\x) \right)_{k \in \agentset}
        \end{equation*}
        This map is sometimes referred to as the \emph{pseudo-gradient} of the game \cite{tatarenko:monotone}. Its
        properties allow for conditions to be found under which the equilibrium $\NE$ is unique. To illuminate these
        conditions, we have the following definition.
        \begin{definition}[Weighted Monotone Game] \label{def::weightedmon} A game $\Gamma$ with a continuous
         pseudo-gradient $F$ is
                \emph{weighted monotone} if there exist positive constants $w_1, \ldots, w_d$ such that, for all
                 $\x, \y \in \Delta$,
                \begin{equation} \label{eqn::weightedmonotone}
                    \langle \x - \y, F(\x; \w) - F(\y; \w) \rangle \geq 0
                \end{equation}
                where $F(\x, \w) = (w_1 F_1, \ldots, w_d F_d)^\top$.

        \end{definition}

        Naturally, the game is \emph{weighted strictly monotone} if, for all $\x \neq \y \in \Delta$, inequality~(\ref
        {eqn::weightedmonotone}) is strict. 
        
        Through these elements, it is possible to explore the nature of the equilibrium of a game from a variational
        perspective (see for instance \cite{facchinei:VI,parise:network,melo:qre}). The seminal results from this
        analysis, outlined in depth in \cite{facchinei:VI} are:
        
        \begin{lemma} \label{lem::VINE} Consider a game $\Gamma$ with pseudo-gradient $F(\x)$. Then, $\NE \in \Delta$ is
         an Equilibrium of $\Gamma$ if and only if it satisfies
        \begin{equation*}
            \langle \x - \NE, F(\x) \rangle \geq 0 \hspace{0.5cm} \text{ for all } \x \in \Delta.
        \end{equation*}
    \end{lemma}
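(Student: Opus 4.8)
The plan is to read this as Minty's lemma transported into the game setting, bridging it through the \emph{Stampacchia} form $\langle \x - \NE, F(\NE) \rangle \geq 0$ (pseudo-gradient at the fixed point), which is nothing but Definition~\ref{def::NE} rewritten. First I would recast the equilibrium condition: inequality~(\ref{eqn::NE}) says $\langle \x_k - \NE_k, r_k(\NE) \rangle \leq 0$ for every agent $k$ and every $\x_k \in \Delta_k$, i.e. $\langle \x_k - \NE_k, F_k(\NE) \rangle \geq 0$ since $F_k = -r_k$. Exploiting the product structure $\Delta = \times_k \Delta_k$, I would add these per-agent inequalities to get $\langle \x - \NE, F(\NE) \rangle \geq 0$ for every $\x \in \Delta$; conversely, testing this joint inequality against the $\x$ that deviates from $\NE$ only in block $k$ (and agrees with $\NE$ elsewhere) makes all but the $k$-th summand vanish and returns the single-agent condition. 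Hence \emph{$\NE$ is an Equilibrium if and only if it solves $\langle \x - \NE, F(\NE)\rangle \geq 0$}, and the remaining task is to match this against the Minty-type inequality of the statement, in which $F$ is evaluated at the running point $\x$.

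For the direction Equilibrium $\Rightarrow$ Minty VI I would decompose, for an arbitrary $\x \in \Delta$,
\begin{equation*}
\langle \x - \NE, F(\x) \rangle = \langle \x - \NE, F(\NE) \rangle + \langle \x - \NE, F(\x) - F(\NE) \rangle .
\end{equation*}
The first term is non-negative by the Stampacchia form just derived, and the second is non-negative by monotonicity of the pseudo-gradient (Definition~\ref{def::weightedmon}); together they give $\langle \x - \NE, F(\x)\rangle \geq 0$. For the converse, Minty VI $\Rightarrow$ Equilibrium, I would use convexity of $\Delta$: fixing $\x \in \Delta$, the point $\x_t := \NE + t(\x - \NE)$ lies in $\Delta$ for $t \in (0,1]$, and substituting it into the hypothesis gives $t\,\langle \x - \NE, F(\x_t)\rangle \geq 0$, hence $\langle \x - \NE, F(\x_t)\rangle \geq 0$. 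Letting $t \to 0^+$ and invoking continuity of $F$ yields $\langle \x - \NE, F(\NE)\rangle \geq 0$, the Stampacchia form, which is the Equilibrium condition by the first paragraph.

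The main obstacle is this converse limiting step: one cannot put $t = 0$ into the Minty inequality directly, so the argument must travel along the segment $\x_t$ and pass to the limit, and it is precisely here that \emph{continuity} of the pseudo-gradient is indispensable — monotonicity plays no role in this direction, whereas it is exactly what the forward direction needs. A final point to handle with care is the weighted setting of Definition~\ref{def::weightedmon}: because the weights are strictly positive and act blockwise per agent, multiplying each agent's block of $F$ by its weight preserves the sign of every single-agent inner product, so the equilibrium/Stampacchia equivalence of the first paragraph is unaffected and the monotonicity split can instead be run with the weighted map $F(\cdot;\w)$ under the associated weighted inner product, delivering the corresponding (weighted) variational inequality.
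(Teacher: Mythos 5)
Your argument is correct, and it is essentially the canonical one: the paper offers no proof of Lemma \ref{lem::VINE} at all, deferring to \cite{facchinei:VI}, and what you have written out is precisely the standard route that reference takes. Your first paragraph (Equilibrium $\iff$ Stampacchia form $\langle \x - \NE, F(\NE) \rangle \geq 0$, via summing per-agent inequalities one way and testing single-block deviations the other way) is exactly right given that Definition \ref{def::NE} is already linear in $\x_k$, and your localisation of hypotheses in the Stampacchia--Minty bridge is accurate: monotonicity is used only for Equilibrium $\Rightarrow$ Minty, continuity of $F$ (automatic here, since the $r_k$ are multilinear in the opponents' strategies) only for the segment-and-limit converse. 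Your closing remark on the weighted case is also sound and matches the paper's own comment that both VI lemmas ``extend readily to the weighted case'': since the weights act blockwise and each per-agent inequality holds separately, positive rescaling preserves every sign, so Equilibrium $\iff$ weighted Stampacchia, and the split runs with $F(\cdot;\w)$.

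One discrepancy is worth flagging, though it is the paper's fault rather than yours: as literally stated (``Consider a game $\Gamma$''), Lemma \ref{lem::VINE} carries no monotonicity hypothesis, and without it the direction you prove via the decomposition $\langle \x - \NE, F(\x)\rangle = \langle \x - \NE, F(\NE)\rangle + \langle \x - \NE, F(\x) - F(\NE)\rangle$ is simply false. For a concrete counterexample, take the two-player coordination game $u_1 = u_2 = \langle \x_1, \x_2 \rangle$ on two actions, so $F(\x) = (-\x_2, -\x_1)$; the pure profile $\NE$ in which both agents play their first action is a (strict) equilibrium, yet at the opposite pure profile $\x$ one computes $\langle \x - \NE, F(\x) \rangle = -2 < 0$. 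So your tacit restoration of the monotonicity assumption is not optional but necessary. Note also that the paper's own Definition of $VI(\mathcal{X}, F)$ and its sole use of this lemma --- deriving $\langle \x(t) - \NE, F(\NE; \w) \rangle \geq 0$ in the proof of Theorem \ref{thm::FTRLGlobalConv} --- are both in Stampacchia form, with $F$ evaluated at $\NE$; the $F(\x)$ in the lemma statement is therefore plausibly a typo for $F(\NE)$, in which case your first paragraph alone, which needs neither monotonicity nor continuity, already constitutes the entire proof.
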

    
    \begin{lemma} \label{lem::UniqueNE} If $\Gamma = (\agentset, (\actionset{k}, u_k)_{k \in \agentset})$ is a strictly
     monotone game, it has a unique Equilibrium $\NE \in \Delta$.
    \end{lemma}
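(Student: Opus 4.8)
The plan is to split the claim into two independent parts, existence and uniqueness, and to let the strict monotonicity assumption do all of its work only in the latter. Throughout I would work with the Stampacchia characterisation that $\NE \in \Delta$ is an Equilibrium if and only if $\langle \x - \NE, F(\NE) \rangle \geq 0$ for all $\x \in \Delta$. This follows immediately from Definition \ref{def::NE}: rewriting (\ref{eqn::NE}) as $\langle \x_k - \NE_k, -r_k(\NE) \rangle \geq 0$, i.e. $\langle \x_k - \NE_k, F_k(\NE) \rangle \geq 0$ for every agent $k$, and summing the nonnegative terms over $k$ gives exactly this variational inequality on $\Delta$.

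For existence, I would observe that $\Delta = \times_k \Delta_k$ is a nonempty, compact, convex subset of $\R^{Nn}$ and that the pseudo-gradient $F$ is continuous (this is part of Definition \ref{def::weightedmon}; for a finite game the rewards $r_{ki}$ are multilinear in the opponents' strategies and hence continuous). The Hartman--Stampacchia existence theorem then yields at least one solution $\NE$ of $VI(\Delta, F)$, which is an Equilibrium; equivalently, existence is already guaranteed by Nash's theorem since any such solution is a Nash Equilibrium of the finite game $\Gamma$. Note that monotonicity plays no role at this stage.

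For uniqueness, I would argue by contradiction. Suppose $\NE, \NE' \in \Delta$ are both Equilibria with $\NE \neq \NE'$. Applying the variational inequality for $\NE$ at the point $\x = \NE'$ gives $\langle \NE' - \NE, F(\NE) \rangle \geq 0$, and applying it for $\NE'$ at the point $\x = \NE$ gives $\langle \NE - \NE', F(\NE') \rangle \geq 0$. Adding these two inequalities and rearranging the inner products yields $\langle \NE' - \NE, F(\NE') - F(\NE) \rangle \leq 0$. This contradicts strict monotonicity (Definition \ref{def::weightedmon} with unit weights), which forces $\langle \NE' - \NE, F(\NE') - F(\NE) \rangle > 0$ whenever $\NE' \neq \NE$. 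Hence $\NE = \NE'$, establishing uniqueness.

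The result is classical, so I do not anticipate a genuine obstacle; the single point demanding care is the choice of variational inequality form. The cross-substitution that produces the sign contradiction relies on the Stampacchia inequality $\langle \x - \NE, F(\NE) \rangle \geq 0$ rather than the Minty form $\langle \x - \NE, F(\x) \rangle \geq 0$ recorded in Lemma \ref{lem::VINE}: used naively, the Minty form merely reproduces the monotonicity inequality and yields no contradiction. Since the two characterisations coincide for a continuous monotone $F$ on a compact convex set, either may be invoked, but I would phrase the uniqueness step in the Stampacchia form to keep the contradiction transparent. The analogous statement for weighted strictly monotone games follows by the same computation applied to the weighted pseudo-gradient $F(\cdot\,;\w)$, the positive weights being absorbed into each agent's block of the variational inequality without affecting any sign.
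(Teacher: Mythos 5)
Your proof is correct, but there is nothing in the paper to compare it against: the paper never proves this lemma, presenting it (together with Lemma \ref{lem::VINE}) as a known result ``outlined in depth in \cite{facchinei:VI}''. What you have written is, in effect, the standard argument from that reference, so you have supplied the proof that the paper delegates to a citation. Your decomposition is the classical one: existence needs only continuity of $F$ and compactness/convexity of $\Delta$ (Hartman--Stampacchia, or Nash's theorem for the finite game, since under the mixed extension Definition \ref{def::NE} is exactly the Nash condition), while strict monotonicity is used solely for uniqueness, via the cross-substitution $\langle \NE' - \NE, F(\NE) \rangle \geq 0$ and $\langle \NE - \NE', F(\NE') \rangle \geq 0$, whose sum contradicts $\langle \NE' - \NE, F(\NE') - F(\NE) \rangle > 0$. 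Your caveat about which variational characterisation to use is well taken, and it is the one point where care is genuinely needed in the context of this paper: Lemma \ref{lem::VINE} records the \emph{Minty} form $\langle \x - \NE, F(\x) \rangle \geq 0$, and cross-substituting that form only yields $\langle \NE' - \NE, F(\NE') - F(\NE) \rangle \geq 0$, which is implied by, rather than contradictory to, strict monotonicity; one must either pass to the Stampacchia form (equivalent under continuity and monotonicity, as you note) or derive it directly from Definition \ref{def::NE}, as you do. Your closing remark on the weighted case also matches the paper's own comment that both lemmas ``extend readily to the weighted case''.
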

    
    In our work, we will leverage the fact that both of these Lemmas extend readily to the weighted case (as all weights
    are assumed non-negative). By applying these properties, it is our goal to examine whether the equilibrium will be
    reached by learning agents. 

     \section{Learning in Weighted Monotone Games} \label{sec::Results}

    We have two main results in this section. The first is that multi-agent Q-Learning can achieve convergence to an
    equilibrium in \emph{any} game through sufficient exploration, parameterised by $T_k$. Following this result, we
    consider the optimality of convergence to an equilibrium. Namely we show a sufficient condition (convergence in
    time-average) in which non-fixed point behaviour (e.g. cycles) outperforms the equilibrium in terms of payoff.
    
    \subsection{Convergence through Sufficient Exploration}
    
    Without further delay, we state our first main result

    \begin{theorem} \label{thm::StabilityUnique} Let $\delta$ be the influence bound of an arbitrary game $\Gamma$ with
     $N$ players. Then, Q-Learning converges to the unique QRE $\NE$ if, for all agents $k$,
        \begin{equation} \label{eqn::UniqueQRE} T_k > \delta (N - 1)
        \end{equation}
    \end{theorem}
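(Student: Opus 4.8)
The plan is to lean entirely on Lemma~\ref{lem::QLRD}: since the Q-Learning dynamics on $\Gamma$ is precisely the replicator dynamics on the perturbed game $\Gamma^H$, and since (by the cited Proposition of \cite{melo:qre}) the QRE of $\Gamma$ coincide with the Nash equilibria of $\Gamma^H$, it suffices to prove two facts about $\Gamma^H$ under the hypothesis $T_k > \delta(N-1)$: that it is strictly monotone, and that the replicator dynamics on it converges to its (then unique) equilibrium. Uniqueness of the QRE will be immediate from Lemma~\ref{lem::UniqueNE} once strict monotonicity holds, so the two substantive tasks are the monotonicity estimate and the convergence argument.

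For strict monotonicity, the pseudo-gradient of $\Gamma^H$ is $F^H_k(\x) = -r_k(\x) + T_k(\ln \x_k + \mathbf{1})$, so for $\x \neq \y$ the constant $\mathbf{1}$ cancels and
\[ \langle \x - \y, F^H(\x) - F^H(\y)\rangle = \sum_{k} \Big[\, T_k \langle \x_k - \y_k, \ln \x_k - \ln \y_k\rangle - \langle \x_k - \y_k, r_k(\x) - r_k(\y)\rangle \,\Big]. \]
I would bound the two families of terms separately. For the entropic term I would use that negative entropy is $1$-strongly convex with respect to $\|\cdot\|_1$ (Pinsker), giving $\langle \x_k - \y_k, \ln \x_k - \ln \y_k\rangle \ge \|\x_k - \y_k\|_1^2$. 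For the reward term, the key observation is that each $r_{ki}$ is multilinear in the opponents' strategies and independent of $\x_k$; telescoping over the $N-1$ opponents one player at a time and invoking the influence bound on each single-player change yields $\|r_k(\x) - r_k(\y)\|_\infty \le \delta \sum_{l \neq k}\|\x_l - \y_l\|_1$, so by H\"older $\langle \x_k - \y_k, r_k(\x)-r_k(\y)\rangle \le \delta\,\|\x_k-\y_k\|_1 \sum_{l\neq k}\|\x_l - \y_l\|_1$. Writing $u_k := \|\x_k-\y_k\|_1$, the whole expression is at least $\sum_k T_k u_k^2 - \delta\big((\sum_k u_k)^2 - \sum_k u_k^2\big)$, and Cauchy--Schwarz in the form $(\sum_k u_k)^2 \le N\sum_k u_k^2$ collapses this to at least $(\min_k T_k - \delta(N-1))\sum_k u_k^2 > 0$. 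This is exactly where the threshold $T_k > \delta(N-1)$ originates, and I expect the bookkeeping of the telescoping estimate -- the step that converts the combinatorial ``influence'' into an analytic Lipschitz constant matching the stated bound -- to be the main obstacle.

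With $\Gamma^H$ strictly monotone, Lemma~\ref{lem::UniqueNE} furnishes a unique equilibrium $\NE$, which is the unique QRE of $\Gamma$; since $T_k>0$, it is fully mixed (every $\bar x_{ki}>0$). To prove convergence I would take the relative entropy $V(\x) = \sum_k \sum_i \bar x_{ki}\ln(\bar x_{ki}/x_{ki})$ as a Lyapunov function, legitimate because $\NE$ is interior and the replicator flow keeps trajectories in the relative interior. Differentiating along the dynamics of Lemma~\ref{lem::QLRD} and using $\sum_i \bar x_{ki}=1$ gives $\dot V = -\langle \x - \NE, F^H(\x)\rangle$, which I would split as $-\langle \x-\NE, F^H(\x)-F^H(\NE)\rangle - \langle \x - \NE, F^H(\NE)\rangle$: the first term is strictly negative for $\x\neq\NE$ by strict monotonicity, and the second is nonpositive because $\NE$, being an equilibrium of $\Gamma^H$, satisfies $\langle \x - \NE, F^H(\NE)\rangle \ge 0$ (Definition~\ref{def::NE}). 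Hence $\dot V < 0$ off the equilibrium and $V=0$ only at $\NE$; since the sublevel sets of $V$ are compact subsets of the interior, standard Lyapunov theory yields global asymptotic convergence to $\NE$, completing the argument. I note finally that a weighted version is available for free should the thresholds $T_k$ need to differ: replacing $F^H$ by its weighted counterpart and $V$ by $\sum_k w_k\,\mathrm{KL}(\NE_k\,\|\,\x_k)$ leaves every step intact.
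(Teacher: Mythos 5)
Your proof is correct, and its skeleton is the same as the paper's: pass to the perturbed game $\Gamma^H$ via Lemma~\ref{lem::QLRD}, show that the hypothesis $T_k > \delta(N-1)$ makes $\Gamma^H$ strictly monotone (whence the equilibrium is unique by Lemma~\ref{lem::UniqueNE}, and coincides with the QRE of $\Gamma$), and then use the relative entropy $\mathrm{KL}(\NE\,\|\,\x)$ as a strict Lyapunov function for replicator dynamics on $\Gamma^H$, splitting $\dot V$ into a monotonicity term and an equilibrium (variational-inequality) term. The difference is in execution: the paper's proof of this theorem is essentially pure citation --- strict (indeed strong) monotonicity of $\Gamma^H$ is taken from Theorem~1 of \cite{melo:qre}, and the Lyapunov property of the KL divergence is Corollary~\ref{corr::RDConvergence}, itself a specialisation of the general FTRL result (Theorem~\ref{thm::FTRLGlobalConv}) proved through the Fenchel-coupling and Bregman-divergence machinery of Lemmas~\ref{lem::steepness}--\ref{lem::lyapFTRL} --- whereas you prove both ingredients from scratch. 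Your monotonicity estimate (Pinsker for the entropic part, an influence-bound Lipschitz estimate for the reward part, Cauchy--Schwarz to collapse the cross terms into the factor $N-1$) is in substance Melo's argument, and it has the virtue of making visible exactly where the threshold $\delta(N-1)$ originates; your direct differentiation of the relative entropy along RD arrives at the same decomposition the paper obtains for general regularisers. What you give up relative to the paper is modularity: Theorem~\ref{thm::FTRLGlobalConv} applies to arbitrary steep, strongly convex regularisers and weighted Bregman divergences, and is reused for Theorem~\ref{thm::QLConvergence} and the potential/zero-sum corollaries, while your computation is specific to the entropy regulariser. One step you flag as the main obstacle does deserve the care you anticipate: for a single-opponent change, $r_{ki}$ is linear in $\x_l$ with a coefficient vector $c$ whose entries are \emph{rewards}, not reward \emph{differences}, so the naive H\"older bound yields $\|c\|_\infty$ rather than $\delta$. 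You must first exploit $\langle \x_l - \y_l, \mathbf{1}\rangle = 0$ to recentre $c$ by one of its entries, and then bound the recentred entries by $\delta$ via multilinearity (averaging over the remaining opponents' pure profiles). With that standard fix the telescoping closes, and your argument is complete.
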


    This result provides a general condition from which the convergence to the QRE can be guaranteed in any game. As one
    might expect, the amount of exploration (the size of $T_k$) required to achieve this will be influenced by the size
    of the game (parameterised by $\delta$) and the number of players in the game (parameterised by $N$). An
    interesting point to note is that this results supports that of \cite{sanders:chaos} in which it was shown that
    non-fixed point behaviour is more likely with low exploration rates, if the number of players is increased. 
    
    \begin{proof}[Proof Sketch]

        Theorem \ref{thm::StabilityUnique} relies on two points: first, the perturbed game $\Gamma^H$ is weighted strictly
        monotone (which gives uniqueness of $\NE$) if $T_k$ satisfies (\ref{eqn::UniqueQRE}), and second, the QL dynamics
        converges to a fixed point in such games. The first point is shown through \cite{melo:qre} Theorem 1 so we
        focus on the latter.

        We achieve this in the following manner. First we show that, along trajectories of FTRL in weighted strictly
        monotone games, the distance to the equilibrium $\NE$ is decreasing. From this, the convergence of replicator
        in strictly monotone games is immediate. The reader will recall that Q-Learning describes replicator dynamics
        in a perturbed game. We show that, if the original game $\Gamma$ is weighted monotone, then the corresponding
        perturbed game $\Gamma^H$ is weighted strictly monotone. Putting all this together yields the convergence of QL
        in weighted monotone games.
     
        \paragraph*{Step 1: Convergence of FTRL} In order to show that the distance to the equilibrium $\NE$ is being
         decreased, we must first define what we mean by \emph{distance}. We do this through the \emph
         {Bregman Divergence} \cite{shalev:online}. 

    \begin{definition} Consider a set of functions $h_k : \Delta_k \rightarrow \R$ and a set of positive scalars $\w =
     (w_1, \ldots, w_N)$ with $k \in \agentset$. The \emph{Weighted Bregman Divergence} induced by $h = (h_k)_
     {k \in \agentset}$ between a set of probability vectors $\x, \y \in \Delta$ with $\x = (\x_k)_
     {k \in \agentset}$, $\y = (\y_k)_{k \in \agentset}$ is given by
        \begin{align*} W_{B}(\x || \y ; h) &= \sum_k w_k  D_{B}(\x_k || \y_k ; h) \\ 
              &= \sum_k w_k \left( h_k(\y_k) - h_k(\x_k) - \langle \nabla h_k(\x_k), \y_k - \x_k \rangle \right).
        \end{align*}

    \end{definition}

    \begin{remark} In particular, with the choice $h_k(x_k) = \sum_i x_{ki} \ln x_{ki}$, the Weighted Bregman Divergence
     corresponds to the \emph{Weighted Kullback-Leibler (KL) Divergence} defined by
        \begin{equation} W_{KL}(\x || \y) = \sum_{k \in \agentset} w_k D_{KL}(\x_k || \y_k) = \sum_
         {k \in \agentset} w_k \sum_{i \in \actionset{k}} x_{ki} \ln \frac{x_{ki}}{y_{ki}}.
        \end{equation}
    \end{remark}

    \begin{theorem} \label{thm::FTRLGlobalConv} Consider a weighted strictly monotone game $\Gamma$. If each agent
     follows an FTRL algorithm whose regulariser satisfies Assumption \ref{ass::regulariser} then, for any initial
     condition $\x(0)$, $\x(t)$ minimises the weighted Bregman divergence towards the unique equilibrium $\NE$.
    \end{theorem}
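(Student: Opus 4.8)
The plan is to identify a strict Lyapunov function for the FTRL flow whose value is precisely the weighted Bregman divergence to $\NE$, and the cleanest vehicle for this is the \emph{Fenchel coupling}. For each agent $k$, write $h_k^\ast(\y_k) = \max_{\x_k \in \Delta_k}\{\langle \x_k, \y_k\rangle - h_k(\x_k)\}$ for the convex conjugate of the regulariser and set
\begin{equation*}
    F_{h_k}(\p_k, \y_k) = h_k(\p_k) + h_k^\ast(\y_k) - \langle \p_k, \y_k\rangle .
\end{equation*}
Under Assumption \ref{ass::regulariser}, strong convexity makes $h_k^\ast$ differentiable with single-valued gradient $\nabla h_k^\ast = Q_k$, while steepness forces the orbit $\x_k(t) = Q_k(\y_k(t))$ to stay in the interior of $\Delta_k$; hence the flow is well defined and $F_{h_k}$ is differentiable along it. The Fenchel--Young inequality gives $F_{h_k}(\p_k,\y_k) \geq 0$ with equality iff $\p_k = Q_k(\y_k)$, and using the primal--dual relation $\y_k = \nabla h_k(Q_k(\y_k))$ one checks that $F_{h_k}(\p_k, \y_k) = D_B(Q_k(\y_k) \,||\, \p_k ; h)$. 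Taking the target to be the unique equilibrium $\NE$ (Lemma \ref{lem::UniqueNE}) and weighting by $\w$, the Lyapunov candidate
\begin{equation*}
    W(t) = \sum_{k \in \agentset} w_k\, F_{h_k}(\NE_k, \y_k(t)) = W_B(\x(t) \,||\, \NE ; h)
\end{equation*}
is exactly the weighted Bregman divergence of the current iterate to $\NE$.

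The core step is to differentiate $W$ along the dynamics. Since $\nabla_{\y} F_{h_k}(\p_k, \y_k) = \nabla h_k^\ast(\y_k) - \p_k = Q_k(\y_k) - \p_k = \x_k - \p_k$ and the FTRL score obeys $\dot\y_k(t) = r_k(\x(t))$, the chain rule yields
\begin{equation*}
    \dot W(t) = \sum_{k \in \agentset} w_k\, \langle \x_k(t) - \NE_k,\, r_k(\x(t))\rangle = -\,\langle \x(t) - \NE,\, F(\x(t);\w)\rangle ,
\end{equation*}
where the last equality uses $F_k = -r_k$. It is deliberate that I route the derivative through the conjugate rather than differentiating $D_B$ directly: Assumption \ref{ass::regulariser} supplies only a Lipschitz (not twice-differentiable) gradient, so the envelope identity for $F_{h_k}$ is what makes the computation legitimate.

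It remains to fix the sign of $\dot W$. From the equilibrium condition (Definition \ref{def::NE}) one has $\langle \x_k - \NE_k, r_k(\NE)\rangle \leq 0$ for every $k$ and every $\x_k$, so after weighting $\langle \x - \NE, F(\NE;\w)\rangle \geq 0$ for all $\x \in \Delta$. Decomposing
\begin{equation*}
    \langle \x - \NE, F(\x;\w)\rangle = \langle \x - \NE, F(\x;\w) - F(\NE;\w)\rangle + \langle \x - \NE, F(\NE;\w)\rangle ,
\end{equation*}
the first term is strictly positive for $\x \neq \NE$ by weighted strict monotonicity (Definition \ref{def::weightedmon}) and the second is non-negative, so $\langle \x(t) - \NE, F(\x(t);\w)\rangle > 0$ whenever $\x(t)\neq\NE$ and vanishes at $\NE$. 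Hence $\dot W(t) \leq 0$, with equality only at the equilibrium, and $W(t) = W_B(\x(t) \,||\, \NE; h)$ decreases monotonically along every trajectory, as claimed. The main obstacle is the second paragraph: making the derivative rigorous under a merely $C^1$, Lipschitz-gradient regulariser forces the conjugate/envelope route and requires steepness to keep $\x(t)$ interior so that $Q_k$ is differentiable and the chain rule applies; once $\dot W$ is expressed as $-\langle \x - \NE, F(\x;\w)\rangle$, the monotonicity argument is routine.
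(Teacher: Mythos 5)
Your proposal is correct and follows essentially the same route as the paper: the identification of the weighted Bregman divergence with the Fenchel coupling (the paper's Lemma \ref{lem::Abe}), differentiation through the conjugate using $\nabla h_k^* = Q_k$ and $\dot{\y}_k = r_k(\x)$ (the paper's Lemma \ref{lem::lyapFTRL}), and the same decomposition into a strict-monotonicity term plus the variational-inequality term $\langle \x - \NE, F(\NE;\w)\rangle \geq 0$. The only gloss is writing $\y_k = \nabla h_k(Q_k(\y_k))$ without the Lagrange multiplier $\mu_k \mathbb{1}$ from the simplex constraint (cf.\ the paper's Lemma \ref{lem::steepness}), which is harmless since that term annihilates differences of points in $\Delta_k$.
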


     \begin{remark} The question of convergence to the equilibrium $\NE$ can also be established through Theorem 4.9
      of \cite{mertikopoulos:finite} and showing that the assumptions required by their theorem are met by
      Assumption \ref{ass::regulariser}, which we do in the Appendix. However, in order to progress towards convergence
      of Q-Learning in \emph{weighted monotone games} (i.e. to lift the strictness requirement), we take the extra step
      of showing that the \emph{Bregman Divergence} is decreasing along trajectories.
        \end{remark}

    \paragraph*{Step 2: Convergence of RD} Using the specialisation $h_k(x_k) = \sum_i x_{ki} \ln x_{ki}$, and the
     relation between the Bregman and KL Divergences from , Theorem \ref*
     {thm::FTRLGlobalConv} implies that $W_{KL}(\NE || \x(t))$ is a Lyapunov function for RD in weighted strictly
     monotone games.
    
    \begin{corollary} \label{corr::RDConvergence} Consider a weighted strictly monotone game $\Gamma$. Then trajectories
     $\x(t)$ under the replicator dynamics minimise the KL-Divergence to the unique Equilibrium.
    \end{corollary}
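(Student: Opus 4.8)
The plan is to obtain the corollary directly as the entropic specialisation of Theorem~\ref{thm::FTRLGlobalConv}, so the task is to identify the correct instance of the FTRL framework rather than to run a new argument. First I would recall that replicator dynamics is precisely the FTRL dynamic~(\ref{eqn::FTRL}) induced by the negative-entropy regulariser $h_k(\x_k) = \sum_i x_{ki}\ln x_{ki}$: computing the mirror map $Q_k(\y_k) = \arg\max_{\x_k \in \Delta_k}\{\langle \x_k, \y_k\rangle - h_k(\x_k)\}$ yields the softmax $Q_k(\y_k)_i = \exp(y_{ki})/\sum_j \exp(y_{kj})$, and differentiating the resulting trajectory recovers RD. This is exactly the choice already singled out in the Learning Model subsection.

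Next I would confirm that this $h_k$ meets Assumption~\ref{ass::regulariser}, which is what licenses the application of Theorem~\ref{thm::FTRLGlobalConv}; this was already observed in the Learning Model subsection, so the only point worth spelling out is its strong convexity, which is the content of Pinsker's inequality $D_{KL}(\x_k\|\y_k) \geq \tfrac{1}{2}\|\x_k - \y_k\|_1^2$, giving $1$-strong convexity in the $\ell_1$ norm on $\Delta_k$. With the hypotheses in place, Theorem~\ref{thm::FTRLGlobalConv} applied to the weighted strictly monotone game $\Gamma$ gives that the RD trajectory $\x(t)$ minimises the weighted Bregman divergence $W_B(\x(t)\,\|\,\NE; h)$ towards the unique equilibrium $\NE$.

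The final step is to translate this into the stated KL form via the Remark following the definition of the weighted Bregman divergence. For the entropic $h_k$ one has $\nabla h_k(\x_k)_i = \ln x_{ki} + 1$, and a short computation using $\sum_i x_{ki} = \sum_i y_{ki} = 1$ gives
\begin{equation*}
D_B(\x_k\,\|\,\y_k; h) = h_k(\y_k) - h_k(\x_k) - \langle \nabla h_k(\x_k), \y_k - \x_k\rangle = \sum_i y_{ki}\ln\frac{y_{ki}}{x_{ki}} = D_{KL}(\y_k\,\|\,\x_k),
\end{equation*}
so that summing against the weights yields $W_B(\x(t)\,\|\,\NE; h) = W_{KL}(\NE\,\|\,\x(t))$. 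Hence the quantity shown to decrease along RD is exactly the weighted KL divergence $W_{KL}(\NE\,\|\,\x(t))$ from the equilibrium to the current state, which is the claim of Corollary~\ref{corr::RDConvergence}.

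The argument is routine once Theorem~\ref{thm::FTRLGlobalConv} is available, so there is no genuine obstacle; the one point demanding care is the bookkeeping on the order of the arguments, since $D_B(\x_k\,\|\,\y_k; h)$ equals $D_{KL}(\y_k\,\|\,\x_k)$ with the slots \emph{reversed}. Getting this right is what guarantees that the Lyapunov function inherited from the Bregman analysis is $W_{KL}(\NE\,\|\,\x(t))$ — distance measured from the fixed equilibrium to the moving trajectory — and not the reverse divergence, which need not behave monotonically.
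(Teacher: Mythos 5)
Your proposal coincides with the paper's own derivation: the paper obtains this corollary precisely by specialising Theorem~\ref{thm::FTRLGlobalConv} to the entropic regulariser $h_k(\x_k)=\sum_i x_{ki}\ln x_{ki}$ (already noted in the Learning Model section to generate RD and to satisfy Assumption~\ref{ass::regulariser}) and invoking the Bregman--KL correspondence, concluding that $W_{KL}(\NE \,||\, \x(t))$ is a Lyapunov function for RD in weighted strictly monotone games. Your explicit bookkeeping showing that the entropic Bregman divergence equals the KL divergence with the argument slots reversed is a welcome clarification of a point the paper's remark glosses over, but it is the same proof.
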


    \begin{remark} Corollary \ref{corr::RDConvergence} mirrors the result found by \cite{sorin:composite} (Prop. 4.5),
     which also showed that the KL-divergence decreases. Here, we show that this is a special case of Theorem \ref
     {thm::FTRLGlobalConv}.
    \end{remark}

    \paragraph{Step 3: Convergence of Q-Learning} Finally, we recognise that the same transformation which takes
     $\Gamma$ to $\Gamma^H$ also takes Q-Learning to RD (c.f. \cite{piliouras:potential} Lemma 3.1). Putting this all
     together, if $T_k$ satisfies (\ref{eqn::UniqueQRE}), the perturbed game $\Gamma^H$ is strictly monotone, which
     yields the convergence of RD to the unique equilibrium $\NE$. This immediately gives the convergence of QL in the
     original game $\Gamma$.
    \end{proof}
    
    \subsection{Convergence through Arbitrary Exploration}
    
    Whilst Theorem \ref{thm::StabilityUnique} gives a condition to achieve convergence through sufficient exploration,
    it is known that there are games for which convergence to a QRE can be achieved with any non-zero exploration rate
    $T_k$. These games are: weighted potential games \cite{piliouras:potential} and weighted zero sum polymatrix
    games \cite{piliouras:zerosum}. In this section, we point out that, under suitable game structures, the reduced requirement on 
    exploration in these games stems from the fact that they are both examples of \emph{monotone} games. We can, therefore, apply the
    following Theorem to yield convergence of QL in such games.

    \begin{theorem}\label{thm::QLConvergence} If the game $\Gamma$ is weighted monotone then, for any $T>0$, Q-Learning
     converges to the unique QRE.
    \end{theorem}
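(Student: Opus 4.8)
The plan is to reduce this statement to the already-established convergence of replicator dynamics in weighted strictly monotone games (Corollary~\ref{corr::RDConvergence}), exploiting the fact that Q-Learning in $\Gamma$ is exactly RD in the perturbed game $\Gamma^H$ (Lemma~\ref{lem::QLRD}). The crucial observation is that, whereas Theorem~\ref{thm::StabilityUnique} needed a large exploration rate to \emph{force} an arbitrary game into the monotone regime, here $\Gamma$ is already weighted monotone, so the logarithmic perturbation only has to supply \emph{strictness} --- and this it does for every $T_k > 0$.

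First I would write down the pseudo-gradient of the perturbed game. From Lemma~\ref{lem::QLRD} the perturbed reward is $r^H_{ki}(\x) = r_{ki}(\x) - T_k(\ln x_{ki} + 1)$, so $F^H_{ki}(\x) = F_{ki}(\x) + T_k \ln x_{ki} + T_k$. Fixing the weights $\w$ that witness the weighted monotonicity of $\Gamma$ and taking any $\x, \y \in \Delta$, the additive constant $T_k$ cancels in the difference and the weighted monotonicity gap splits as
\begin{equation*}
\langle \x - \y, F^H(\x;\w) - F^H(\y;\w) \rangle = \langle \x - \y, F(\x;\w) - F(\y;\w) \rangle + \sum_{k} w_k T_k \langle \x_k - \y_k, \ln \x_k - \ln \y_k \rangle.
\end{equation*}

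The first term is nonnegative precisely because $\Gamma$ is weighted monotone with these weights. For the second term, the main step is the elementary fact that $t \mapsto \ln t$ is strictly increasing, so each summand $(x_{ki} - y_{ki})(\ln x_{ki} - \ln y_{ki})$ is nonnegative and vanishes only when $x_{ki} = y_{ki}$; since $w_k, T_k > 0$, the whole sum is strictly positive whenever $\x \neq \y$. Hence $\Gamma^H$ is weighted strictly monotone with the \emph{same} weights $\w$, for every $T_k > 0$. This is the heart of the argument and the step I expect to require the most care: not in the inequality itself, which is short, but in keeping the weights of the monotonicity inequality aligned with the per-agent weights of the Bregman/KL divergence, since replicator dynamics rescales uniformly within each agent and so the Lyapunov machinery of Corollary~\ref{corr::RDConvergence} only applies verbatim when $\w$ is constant across the actions of a given agent.

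With strict monotonicity of $\Gamma^H$ in hand the remainder is assembly. By the weighted version of Lemma~\ref{lem::UniqueNE}, $\Gamma^H$ has a unique Equilibrium, which by Proposition~2 of \cite{melo:qre} is exactly the (unique) QRE of $\Gamma$ and coincides with the unique fixed point of the Q-Learning dynamics. Corollary~\ref{corr::RDConvergence} then guarantees that RD in $\Gamma^H$ drives the weighted KL divergence to this equilibrium down to zero along every trajectory, giving convergence; transporting this back through Lemma~\ref{lem::QLRD} yields convergence of Q-Learning in $\Gamma$ to its unique QRE for any $T > 0$, as claimed.
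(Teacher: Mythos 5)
Your proposal is correct and follows essentially the same route as the paper: the decomposition of $\langle \x - \y, F^H(\x;\w) - F^H(\y;\w) \rangle$ into the original weighted-monotone term plus the strictly monotone logarithmic term is exactly the paper's Lemma~\ref{lem::QLMonotone}, and the assembly via Corollary~\ref{corr::RDConvergence} and Lemma~\ref{lem::QLRD} matches the paper's proof of Theorem~\ref{thm::QLConvergence}. Your additional explicit appeals to the weighted version of Lemma~\ref{lem::UniqueNE} and to Proposition~2 of \cite{melo:qre} for identifying the equilibrium of $\Gamma^H$ with the QRE of $\Gamma$ are steps the paper leaves implicit, but they do not change the argument.
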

    
     The proof of this statement, presented in full in the Appendix, lies in recognising that the transformation which takes $\Gamma$ to $\Gamma^H$ is an additive term of the form
    
    \begin{equation*}
        -T_k (\ln x_{ki} + 1)
    \end{equation*}
    
    which is a strictly monotone function in $\x_k$. As such, if $\Gamma$ is already weighted monotone then, for any positive value of $T_k$, $\Gamma^H$ must be weighted strictly monotone.
    
     \paragraph*{Weighted Potential Games} Our first application concerns the dynamics of learning in weighted potential
      games. These concern the cooperative setting in which the game admits a global function over all agents'
      strategies. Formally, a game is called a weighted potential game if there exists a function
      $U \, : \Delta \rightarrow \R$ and positive weights $w_1, \ldots, w_N > 0$ such that, for each player
      $k \in \agentset$, $U(\x_k, \x_{-k}) - U(\y_k, \x_{-k}) = w_k \left(u_k(\x_k, \x_{-k}) - u_k(\y_k, \x_
      {-k}) \right)$ for all $\x_k, \y_k \in \Delta_k$ and all $\x_{-k} \in \Delta_{-k}$.

    \begin{lemma}\label{lem::PotentialConv} Consider a weighted potential game $\Gamma$ with concave potential $U
     (\x)$. Then, for any $T_K > 0$, the Q-Learning dynamics converges to the unique QRE $\NE$.
    \end{lemma}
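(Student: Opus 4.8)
The plan is to reduce this lemma to Theorem \ref{thm::QLConvergence} by establishing that a weighted potential game with concave potential is a weighted monotone game in the sense of Definition \ref{def::weightedmon}. Once that structural fact is in hand, convergence of Q-Learning to the unique QRE for every $T_k > 0$ is immediate from the theorem. The entire task therefore reduces to linking concavity of $U$ to monotonicity of the game's pseudo-gradient $F$.

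First I would differentiate the defining identity of the weighted potential. Fixing $\x_{-k}$ and differentiating $U(\x_k, \x_{-k}) - U(\y_k, \x_{-k}) = w_k\left(u_k(\x_k, \x_{-k}) - u_k(\y_k, \x_{-k})\right)$ with respect to $x_{ki}$, and using $r_{ki}(\x) = \partial u_k(\x)/\partial x_{ki}$, yields $\partial U(\x)/\partial x_{ki} = w_k\, r_{ki}(\x)$. Stacking over actions and agents gives $\nabla_k U(\x) = w_k\, r_k(\x)$ for every $k \in \agentset$. I would then choose the per-coordinate weights in Definition \ref{def::weightedmon} to be constant on each agent's block, equal to the potential weights $w_k$. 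With this choice the weighted pseudo-gradient satisfies $F(\x; \w)_{ki} = -w_k\, r_{ki}(\x) = -\partial U(\x)/\partial x_{ki}$, so that $F(\x; \w) = -\nabla U(\x)$ on all of $\Delta$.

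It then remains to invoke the first-order characterisation of concavity on the convex set $\Delta$: a differentiable function $U$ is concave precisely when its gradient is monotone-decreasing, i.e. $\langle \x - \y, \nabla U(\x) - \nabla U(\y) \rangle \leq 0$ for all $\x, \y \in \Delta$. Substituting $F(\x; \w) = -\nabla U(\x)$ turns this into $\langle \x - \y, F(\x; \w) - F(\y; \w) \rangle \geq 0$, which is exactly the defining inequality (\ref{eqn::weightedmonotone}) of a weighted monotone game. Hence $\Gamma$ is weighted monotone, and Theorem \ref{thm::QLConvergence} delivers convergence of the Q-Learning dynamics to the unique QRE for every $T_k > 0$.

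As for the main obstacle, the argument is short and no single step is genuinely hard; the point requiring the most care is the bookkeeping that matches the potential weights to the monotonicity weights. The definition of weighted monotonicity permits an independent weight per coordinate, and one must check that taking these weights constant within each agent's block (and equal to $w_k$) is both admissible and precisely what collapses $F(\x; \w)$ to $-\nabla U$. I would also note in passing that differentiability of $U$, needed for the gradient characterisation of concavity, is automatic here, since in a finite game the potential extends multilinearly to $\Delta$ and is therefore smooth.
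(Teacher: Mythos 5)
Your proposal is correct and takes essentially the same route as the paper: the paper factors the argument through Lemma \ref{lem::ConcavePotential} (concave potential $\Rightarrow$ weighted monotone pseudo-gradient, obtained by the same differentiation of the potential identity, $D_{\x_k} U = w_k r_k$, together with the convexity-implies-monotone-derivative proposition) and then concludes via Lemma \ref{lem::QLMonotone} and Theorem \ref{thm::QLConvergence}, exactly as you do. If anything, your inline identification $F(\x;\w) = -\nabla U(\x)$ followed by the joint first-order concavity inequality is phrased more carefully than the paper's agent-by-agent wording, but the substance is identical.
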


    Of course, it must be noted that, in many games, the potential is not concave and so require a different approach
    towards showing convergence. \cite{piliouras:potential} performs such an analysis and also considers the geometry
    of the multiple QRE which can exist due to the non-concavity of the potential.

    \paragraph*{Weighted Zero Sum Polymatrix Games}

    In this application, we consider the competitive setting through the \emph{weighted zero sum network polymatrix game}.
    Formally, a network polymatrix game \newline $\Gamma = \left( (\agentset, \edgeset), (\actionset{kl}, A^{kl})_{
    (k, l) \in \edgeset} \right)$ includes a network $(\agentset, \edgeset)$ in which $\edgeset$ consists of pairs of
    agents $k, l \in \agentset$ who are connected in the network. Each edge is imbued with payoff matrices $A^
    {kl}$ which denotes the reward to agent $k$ against agent $l$. The total payoff received by agent $k$, then, is
    \begin{equation} u_k(\x_k, \x_{-k}) = \sum_{(k, l) \in \edgeset} \langle \x_k, A^{kl}\x_l \rangle
    \end{equation}
    A game is called \emph{weighted zero sum network polymatrix game} if there exists positive constants $w_1, \ldots, w_N$ such
    that, for all $\x \in \Delta$
    \begin{equation}
        \sum_{k \in \agentset} w_k \langle \x_k, A^{kl}\x_l \rangle = 0
    \end{equation}

    \begin{lemma}\label{lem::NZSGConv} Consider a weighted zero sum network polymatrix game $\Gamma$. The unique QRE $\x$ is
     globally asymptotically stable under QL for any $T_k > 0$.
    \end{lemma}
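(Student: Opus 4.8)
The plan is to reduce Lemma \ref{lem::NZSGConv} to an application of Theorem \ref{thm::QLConvergence}. By that theorem, it suffices to show that a weighted zero sum network polymatrix game is a weighted monotone game in the sense of Definition \ref{def::weightedmon}; global asymptotic stability of the unique QRE under QL then follows immediately, since Theorem \ref{thm::QLConvergence} already packages uniqueness (via Lemma \ref{lem::UniqueNE} applied to the strictly monotone perturbed game $\Gamma^H$) together with convergence of the dynamics.

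So the core task is to verify weighted monotonicity of the pseudo-gradient $F$. First I would compute the rewards explicitly. For agent $k$ with payoff $u_k(\x) = \sum_{(k,l) \in \edgeset} \langle \x_k, A^{kl}\x_l \rangle$, the reward vector is $r_k(\x) = \sum_{(k,l)\in\edgeset} A^{kl}\x_l$, which is \emph{linear} in the opponents' strategies and independent of $\x_k$ itself; hence $F_k(\x) = -\sum_{(k,l)\in\edgeset} A^{kl}\x_l$. The next step is to form the weighted monotonicity expression $\langle \x - \y, F(\x;\w) - F(\y;\w)\rangle$ using the weights $w_1,\ldots,w_N$ from the zero sum definition. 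Because $F$ is affine, setting $\mathbf{z} = \x - \y$ gives
\begin{equation*}
\langle \x - \y, F(\x;\w) - F(\y;\w)\rangle = -\sum_{(k,l)\in\edgeset} w_k \langle \mathbf{z}_k, A^{kl}\mathbf{z}_l \rangle.
\end{equation*}
The goal is to show this quantity is nonnegative (indeed I would aim to show it equals zero, reflecting the fact that zero sum games are monotone but only weakly, with the dynamics cycling at energy level rather than strictly contracting).

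The key step, and the one I expect to be the main obstacle, is showing that the weighted zero sum condition forces the bilinear form $\sum_{(k,l)\in\edgeset} w_k \langle \mathbf{z}_k, A^{kl}\mathbf{z}_l\rangle$ to vanish for the displacement $\mathbf{z} = \x - \y$. The given condition $\sum_{k\in\agentset} w_k \langle \x_k, A^{kl}\x_l\rangle = 0$ holds for every joint strategy $\x \in \Delta$, so one would like to conclude the analogous identity for arbitrary differences. The subtlety is that $\mathbf{z} = \x - \y$ is a difference of probability vectors, not itself in $\Delta$, so I cannot substitute it directly; instead I would polarise the zero sum identity. Writing the condition at $\x$, at $\y$, and at the point $\x$ with one coordinate block replaced, and taking appropriate differences, isolates the cross terms. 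Care is needed because the indexing over edges $\edgeset$ pairs $(k,l)$ asymmetrically, so I would symmetrise by summing the condition over all edges and exploiting that each unordered pair contributes both $w_k A^{kl}$ and $w_l A^{lk}$ terms.

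Once the bilinear form is shown to vanish on displacements, weighted monotonicity (with equality, i.e. the game is monotone but not strictly so) is established, Theorem \ref{thm::QLConvergence} applies to the perturbed game, and the conclusion follows. If the polarisation argument proves delicate, an alternative I would keep in reserve is to appeal directly to the result of \cite{piliouras:zerosum} that QL converges to the unique QRE in weighted zero sum polymatrix games, and simply note that weighted monotonicity is the structural property underlying that convergence — thereby framing the present lemma as a corollary that unifies the zero sum case with the potential-game case under the single monotone-games umbrella of Theorem \ref{thm::QLConvergence}.
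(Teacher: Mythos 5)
Your proposal is correct and follows essentially the same route as the paper: reduce to Theorem \ref{thm::QLConvergence} by verifying weighted monotonicity, expand the bilinear form $\langle \x - \y, F(\x;\w) - F(\y;\w)\rangle$ into diagonal terms (which vanish by the weighted zero sum property) and cross terms, and kill the cross terms by applying the zero sum condition at unilaterally-deviated points $(\y_k, \x_{-k})$ and summing over $k$ --- which is precisely the paper's Proposition \ref{prop::NZSGProp} (borrowed from \cite{kadan:exponential}). Your ``polarisation'' step is exactly that proposition, and your conclusion that the form vanishes identically (monotone but not strictly so) matches the paper.
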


    The convergence of Q-Learning in this class of games is also proven through an alternate proof in \cite
    {piliouras:zerosum}, in which the weighted KL-divergence was also found as a Lyapunov function. Lemma \ref
    {lem::NZSGConv} proves this point by showing that weighted network polymatrix zero sum games fall under the more general class
    of weighted monotone games. In such a manner, the results of \cite{piliouras:zerosum}, can be extend to all weighted
    monotone games.
    
    \subsection{Learning Outperforms the Equilibrium}

    In our second result, we consider the optimality of exploration as a means of reaching the equilibrium. We know, for
    instance, that RD, which corresponds to QL with zero exploration rates $T_k$, will not converge in a merely
    monotone game. However, in this case, we know from \cite{mertikopoulos:finite} that the \emph{time-average} of the
    trajectory does reach the equilibrium. In this case we are presented with the following question
    
    \textit{Should we apply non-zero exploration rates $T_k$ to ensure convergence to an equilibrium, or allow the
     trajectory to remain non convergent?}
    
    We answer this question by considering the \emph{payoff performance} of the dynamic, where performance is measured
    by the Time Averaged Social Welfare (TSW)
    \begin{eqnarray} 
    TSW & = & \lim_{t \rightarrow \infty}\frac{1}{t} \int_0^t SW(\x(s)) ds. \label{eqn::TSW}
    \end{eqnarray} 
    
    where $SW(\x) = \sum_k u_k(\x_k, \x_{-k})$. Intuitively, the Social Welfare $SW(\x)$ measures the sum of payoffs received by all agents at some state
    $\x \in \Delta$. In (\ref{eqn::TSW}), this quantity is averaged along trajectories of the learning dynamic. From
    the following theorem we see that, in monotone games, the social welfare is higher if agents play according to
    FTRL (and therefore do not converge), than if they were to play according to the equilibrium.

    \begin{theorem} \label{thm::PayoffPerformance} 
    Consider a polymatrix game $\Gamma = ( (\agentset, \edgeset),
     (\actionset{kl}, A^{kl})_{(k, l) \in \edgeset} )$. It is the case that
        \begin{equation*} u_{k}(\x_k, \x_{-k}) = \sum_{(k, l) \in \edgeset} \langle \x_k, A^{kl} \x_l \rangle.
        \end{equation*} Supposing that the learning dynamic is such that $\lim_{t \rightarrow \infty} \frac{1}
         {t} \int_0^t R(s) \, ds = 0$, where $R(t) = \sum_k R_k(t)$ and 
        \begin{align*} R_k(t) &= \max_{\x_k' \in \Delta_k} \int_0^t \left [u_k(\x_k', x_{-k}
         (s)) - u_k(\x_k(s), \x_{-k}(s)) \right] \, ds,
        \end{align*} 
        is the \emph{regret} of agent $k$ \cite{shalev:online} and supposing also that the time averaged strategy $\mu(t) := \frac{1}{t} \int_0^t \x(s) \, ds$
         along the learning dynamics converges to the equilibrium $\NE$, then TSW is asymptotically greater than or
         equal to the Social Welfare of the equilibrium $\NE$, i.e. $TSW \geq SW(\NE)$
    \end{theorem}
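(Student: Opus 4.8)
The plan is to exploit the bilinearity of the polymatrix payoffs to convert each agent's best-response benchmark into a static quantity evaluated at the time-averaged strategy $\mu(t)$, and then to extract the inequality directly from the regret decomposition together with the hypothesis $\mu(t)\to\NE$.

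First I would record the consequence of the bilinear structure. Since $u_k(\x_k',\x_{-k}(s))=\sum_{(k,l)\in\edgeset}\langle \x_k',A^{kl}\x_l(s)\rangle$ is linear in the opponents' play, the time integral passes through the matrices onto $\mu_{-k}(t):=\tfrac1t\int_0^t\x_{-k}(s)\,ds$, giving $\int_0^t u_k(\x_k',\x_{-k}(s))\,ds = t\,u_k(\x_k',\mu_{-k}(t))$ for any fixed deviation $\x_k'\in\Delta_k$, and in particular for the fixed comparator $\x_k'=\NE_k$. This identity is what makes the time average the natural object here.

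Next I would use the definition of regret to produce a \emph{lower} bound on the cumulative social welfare $\Phi(t):=\int_0^t SW(\x(s))\,ds$. By definition, $\int_0^t u_k(\x_k(s),\x_{-k}(s))\,ds = \max_{\x_k'\in\Delta_k}\int_0^t u_k(\x_k',\x_{-k}(s))\,ds - R_k(t)$. Bounding the in-hindsight maximum below by its value at $\x_k'=\NE_k$ and applying the identity from the previous step yields $\int_0^t u_k(\x_k(s),\x_{-k}(s))\,ds \geq t\,u_k(\NE_k,\mu_{-k}(t)) - R_k(t)$. Summing over $k$ and recalling $SW=\sum_k u_k$ and $R=\sum_k R_k$ gives $\Phi(t)\geq t\sum_k u_k(\NE_k,\mu_{-k}(t)) - R(t)$. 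The deliberate replacement of the hindsight-optimal strategy by the fixed strategy $\NE_k$ is precisely what orients the bound as $\geq$, matching the inequality asserted by the theorem (note that the equilibrium property of $\NE$ is not needed for this direction, only the convergence $\mu(t)\to\NE$).

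Finally I would divide by $t$ and pass to the limit. The term $\sum_k u_k(\NE_k,\mu_{-k}(t))=\sum_k\sum_{(k,l)\in\edgeset}\langle\NE_k,A^{kl}\mu_l(t)\rangle$ is linear, hence continuous, in $\mu(t)$, so $\mu(t)\to\NE$ forces it to converge to $\sum_k u_k(\NE_k,\NE_{-k})=SW(\NE)$; the remaining term $R(t)/t$ is killed by the vanishing-regret hypothesis. Taking $\liminf$ of $\Phi(t)/t=\tfrac1t\int_0^t SW(\x(s))\,ds$ then delivers $TSW\geq SW(\NE)$. The hard part will be the regret bookkeeping: I must ensure the regret contribution is asymptotically non-positive so that it drops out without reversing the inequality (this is exactly where the no-regret assumption enters), and I should reconcile the pointwise quotient $R(t)/t$ with the averaged form $\tfrac1t\int_0^t R(s)\,ds$ appearing in the statement, controlling any discrepancy between them. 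Care with the order of limits and with the assumed existence of $TSW$ as a genuine limit rounds out the argument.
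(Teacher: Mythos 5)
Your proposal is correct, and its skeleton is the same as the paper's: lower-bound each agent's hindsight maximum by a judiciously chosen comparator, use the polymatrix bilinearity to pass the time integral through onto the time-averaged strategy, then take limits using vanishing regret and $\mu(t) \rightarrow \NE$. The one substantive difference is the comparator. The paper plugs in agent $k$'s own time average $\mu_k(t)$ (legitimate because $\Delta_k$ is convex, so $\mu_k(t) \in \Delta_k$, and the max is over all of $\Delta_k$ at each fixed $t$), which yields the intermediate bound $\frac{1}{t}\int_0^t SW(\x(s))\,ds \geq SW(\mu(t)) - \frac{1}{t}R(t)$; you plug in the fixed strategy $\NE_k$, yielding $\frac{1}{t}\int_0^t SW(\x(s))\,ds \geq \sum_k u_k(\NE_k, \mu_{-k}(t)) - \frac{1}{t}R(t)$. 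The paper's variant buys an intermediate statement that is meaningful regardless of where $\mu(t)$ converges (the trajectory's time-averaged welfare dominates the welfare of the time-averaged strategy), with the hypothesis $\mu(t) \rightarrow \NE$ entering only in the final line; your variant uses the textbook fixed-comparator form of external regret and makes the limit step a triviality, since $\sum_k u_k(\NE_k, \cdot)$ is linear, hence continuous, in the opponents' averaged profile. Both variants need the polymatrix structure for exactly the same reason (linearity of $u_k$ in the joint opponent profile, which fails for general multilinear payoffs), so this is a cosmetic fork rather than a genuinely different method.

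The loose end you flag, reconciling $R(t)/t$ with the stated hypothesis $\lim_{t \rightarrow \infty} \frac{1}{t}\int_0^t R(s)\,ds = 0$, is real but is a defect of the theorem statement rather than of your argument: the paper's own proof invokes ``the assumption that $\lim_{t \rightarrow \infty} \frac{1}{t}R(t) = 0$'' even though that is not what the theorem assumes, and neither form implies the other in general (e.g.\ $R(t) = t/\ln t$ has vanishing pointwise quotient but divergent averaged quotient, while a function with thin, tall spikes has vanishing averaged quotient but not vanishing pointwise quotient). You are, if anything, more careful than the source here; the clean fix for both proofs is simply to state the no-regret hypothesis as $R(t)/t \rightarrow 0$.
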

    
    \begin{remark} In particular this result holds for FTRL dynamics in monotone games, for which the time average
     approaches the equilibrium $\NE$ asymptotically \cite{mertikopoulos:finite}. In addition, FTRL is a \emph
     {no-regret} dynamic, in that $\lim_{t \rightarrow \infty} \frac{1}{t} \int_0^t R_k(s) \, ds = 0$ for all agents
     $k$.
    \end{remark}
    
    \begin{proof} 
    

        Define $\mu_k(t) \in \Delta_k$ as the time averaged strategy of agent $k$, i.e.
        \begin{equation} \label{eqn::timeaveragestrat}
            \mu_k(t) = \frac{1}{t} \int_0^t \x_k(s) \, ds.
        \end{equation}
        Then,
        \begin{align*} R_k(t) &= \max_{\x_k' \in \Delta_k} \int_0^t u_k(\x_k', x_{-k}(s))  \, ds - \int_0^t u_k(\x_k
         (s), \x_{-k}(s)) \, ds \\
            &\geq \int_0^t u_k(\mu_k(t), x_{-k}(s))  \, ds - \int_0^t u_k(\x_k(s), \x_{-k}(s)) \, ds \\
            &= \int_0^t \mu_k(t) \cdot \sum_{(k, l) \in \edgeset} A^{kl} \x_l(s) \, ds - \int_0^t \x_k(s) \cdot \sum_{
             (k, l) \in \edgeset} A^{kl} \x_l(s) \, ds
        \end{align*}
        Then, dividing by $t$ and applying (\ref{eqn::timeaveragestrat}) to $\x_l$, the inequality reads
        \begin{align*}
            \frac{1}{t} R_k(t) \geq \mu_k(t) \cdot \sum_{(k, l) \in \edgeset} A^{kl} \mu_l(t) -  \frac{1}
             {t} \int_0^t \x_k(s) \cdot \sum_{(k, l) \in \edgeset} A^{kl} \x_l(s) \, ds 
        \end{align*}
        Taking the sum over all agents $k$ and defining $R(t) = \sum_k R_k(t)$ we have
        \begin{small}
        \begin{align*}
            \frac{1}{t} R(t) &\geq \sum_k \Big( \mu_k(t) \cdot \sum_{(k, l) \in \edgeset} A^{kl} \mu_l(t)  - \frac{1}
             {t} \int_0^t \x_k(t) \cdot \sum_{(k, l) \in \edgeset} A^{kl} \x_l(s) \, ds \Big)\\
            &= SW(\mu(t)) - \frac{1}{t} \int_0^t SW(\x(s))  \, ds 
        \end{align*}
        \end{small}
        \begin{align*}
            &\implies \frac{1}{t} \int_0^t SW(\x(s))  \, ds \geq SW(\mu(t)) - \frac{1}{t} R(t) \\
            &\implies \lim_{t \rightarrow \infty}\frac{1}{t} \int_0^t SW(\x(s))  \, ds \geq \lim_
              {t \rightarrow \infty} SW(\mu(t))
        \end{align*} where, in the final equality, we use the assumption that $\lim_{t \rightarrow \infty} \frac{1}{t} R
         (t) = 0$. Applying now the assumption that $\mu(t) \rightarrow \NE$, we have the final result that
        \begin{equation*}
            \lim_{t \rightarrow \infty}\frac{1}{t} \int_0^t SW(\x(s))  \, ds  \geq SW(\NE).
        \end{equation*}
    \end{proof}
    
    \begin{remark}
     We point out here that Theorem \ref{thm::PayoffPerformance} defines payoff performance as a \emph{time-average} and as a sum over all agents. As such, it is possible that a particular agent is losing out on payoff consistently throughout learning. Similarly, it is possible that at some time $t$, all agents are performing worse than the equilibrium. Finally, the result concerns asymptotic behaviour, so it could be the case that the equilibrium initially outperforms the learning dynamic. However, Theorem \ref{thm::PayoffPerformance} shows that, eventually, the cumulative, time-averaged performance (\ref{eqn::TSW}) will be better than the equilibrium.
    \end{remark}
    
    Similar results on the performance of non-fixed point learning dynamics have been found for Fictitious Play \cite{ostrovski:payoff},
    Replicator Dynamics \cite{kleinberg:nashbarrier} and no-regret learning \cite{anagnostides:last-iterate}. In the latter most of these, TSW was also applied as a measure of payoff performance and it was found that the payoff in discrete time \emph{online mirror descent} outperforms the Nash Equilibrium. Theorem \ref{thm::PayoffPerformance} adds to the body of literature studying payoff performance of learning by showing a sufficient condition (convergence in time-average) under which any no-regret learning algorithm (including Fictitious Play and Replicator) will outperform the equilibrium.

    \subsection{Discussion of Results} 

    Theorem \ref{thm::StabilityUnique} presents an concrete approach for achieving convergence to a unique equilibrium
    in any game. This allows us to abstract beyond the typical arena of study: namely weighted potential or zero sum
    network games. In fact, we showed convergence in both of these cases due to the fact that they satisfy the
    monotonicity assumption and, therefore, satisfy the assumptions of Theorem \ref{thm::QLConvergence}. 

    Though general in nature, these results are not without their limitations. They rely on the assumption of a discrete
    action set, so that agent strategies all evolve on $\Delta$. This allows us to assume the existence of an
    Equilibrium, through the compactness of $\Delta$. However, generalising to arbitrary continuous action sets would
    widen the range of applications which our work encompasses. In addition, Theorem \ref*{thm::QLConvergence} is
    derived for continuous time QL. This is a reasonable stance to take as it has been shown repeatedly that continuous
    time approximations of algorithms provide a strong basis for analysing the algorithms themselves
    \cite{hofbauer:zerosum,tuyls:qlearning}. However, the accuracy of discrete time algorithms is always dependent on
     parameters, most notably step sizes. Such an analysis of the discrete variants presents a fruitful avenue for
     further research.

    To qualify Theorem \ref*{thm::StabilityUnique}, we point out that it does not give any indication regarding the
    performance of the system, merely its behaviour. Furthermore, its success relies on the increasing exploration
    rates of the agents and, therefore, at the cost of their exploitation. We showed in Theorem \ref
    {thm::PayoffPerformance} that, under certain conditions, agents following a non-convergent dynamic may actually
    outperform the equilibrium in terms of payoff. Our experiments monitor this effect further by showing that, in
    certain games, as $T_k$ is increased the performance of the system decreases.
       \section{Experiments} \label{sec::Experiments}

    \begin{figure*}[t!]
        \begin{subfigure}{\textwidth}
            \includegraphics[width=\columnwidth]{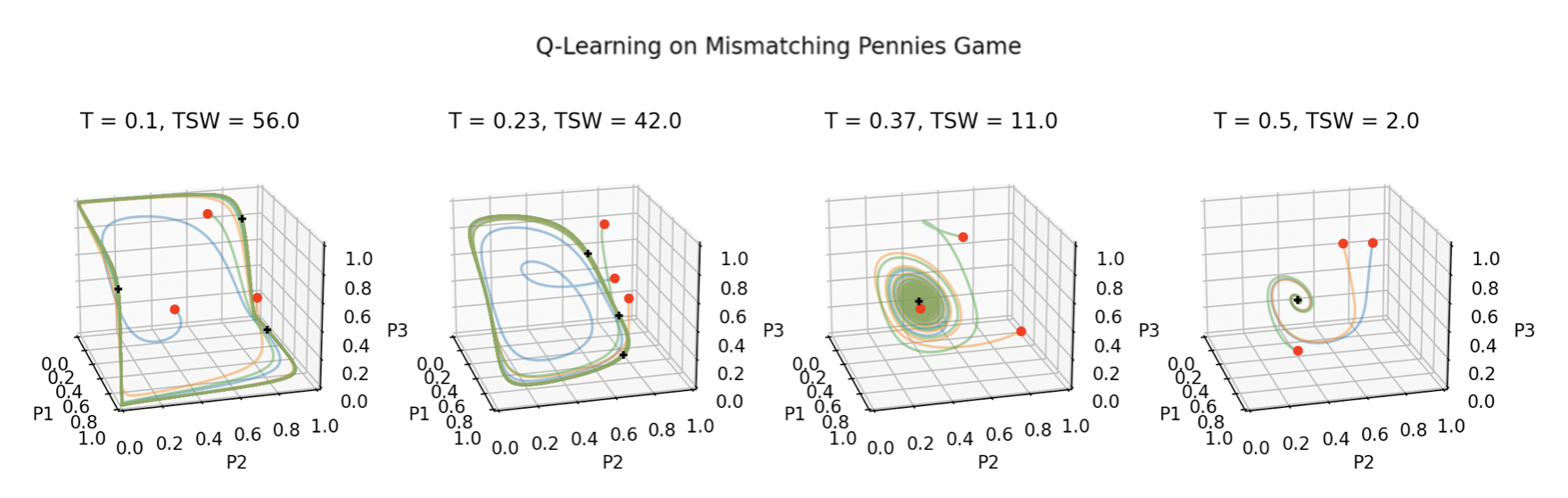}
        \end{subfigure}

        \begin{subfigure}{\textwidth}
            \includegraphics[width=\columnwidth]{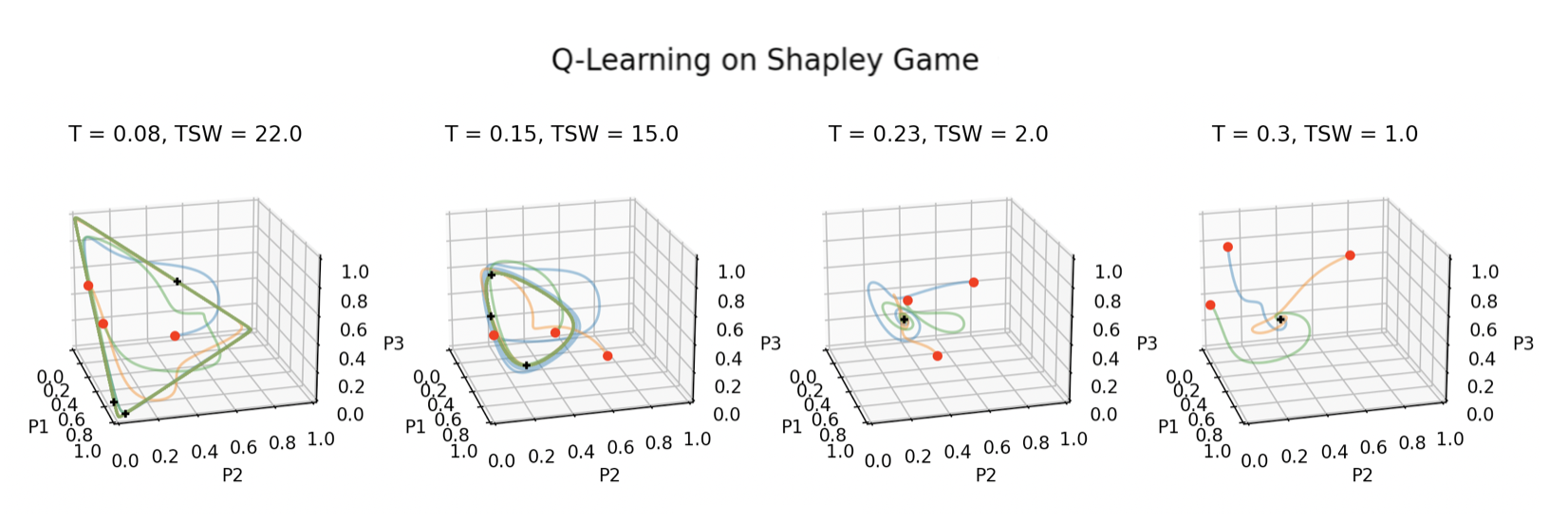}
        \end{subfigure}        
        \caption{\label{fig::Performance} Trajectories of Q-Learning, red dots indicate initial conditions whilst black
         markers indicate final positions. TSW is averaged over all three initial conditions.}
    \end{figure*}

    \begin{figure*}[t!]
        \begin{subfigure}{0.45 \textwidth}
            \includegraphics[width= \textwidth]{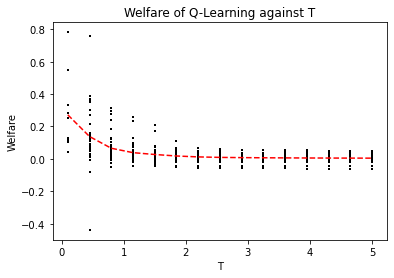}
            \caption{\label{fig::Welfarep5n5}}
        \end{subfigure}
        \begin{subfigure}{0.45 \textwidth}
            \includegraphics[width=\textwidth]{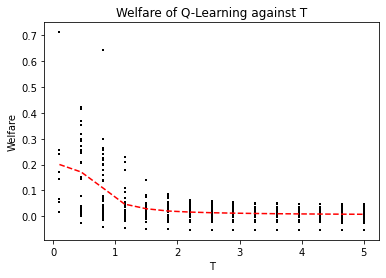}
            \caption{\label{fig::Welfarep7n5}}
        \end{subfigure}        
        \caption{\label{fig::Welfare} Normalised TSW against $T$ for 35 randomly generated games. TSW is normalised to
         lie between [-1, 1] in each game. Results are averaged over 10 initial conditions. The red line denotes the
         mean TSW across all 35 games. a. Five Players, Five Actions. b. Seven Players, Five Actions}
    \end{figure*}

    In the following experiments, we consider the limitation discussed in the previous section. Namely, we test to see
    whether the optimality of the learning algorithm may decrease as $T_k$ is increased. To do this we consider two
    examples: the Mismatching Pennies and the Shapley games. The former, proposed in \cite{kleinberg:nashbarrier}, is
    composed of a network of three agents, each equipped with two actions, Heads and Tails. The payoff given to each
    agent $k$ is given by
    \begin{align*} u_k &= u_k(\x_k, \x_j) = \x_k \A \x_{k-1} \\
        \A &= \begin{pmatrix} 0 & 1\\ M & 0 \\
        \end{pmatrix}, \hspace{0.2cm} M \geq 1.
    \end{align*}
    For the latter \cite{shapley:twoperson}, we examine a network variation on the original two player game towards a
    three player network variant in which payoffs, parameterised by $\beta \in (0, 1)$ are defined as
    \begin{align*} u_k &= u_k(\x_k, \x_{-k}) = \x_k \A \x_{k-1} + \x_k \B^\top \x_{k+1}\\ A&=\left(\begin{array}
     {ccc} 1 & 0 & \beta \\
            \beta & 1 & 0 \\ 0 & \beta & 1
            \end{array}\right), \, B=\left(\begin{array}{ccc}
            -\beta & 1 & 0 \\ 0 & -\beta & 1 \\ 1 & 0 & -\beta
            \end{array}\right),
    \end{align*}
    \begin{remark} In the case of a network polymatrix game, the influence bound of the game
     is
    \begin{equation*}
        \max_{k \in \mathcal{N}, i \in S_k, s_{-k}, \Tilde{s}_{-k} \in S_{-k}} |\left( A^k \right)_{i, s_{-k}} - \left
         ( A^k \right)_{i, \Tilde{s}_{-k}}|.
    \end{equation*}
    In other words, it is the maximum difference between any row elements across the payoff matrices for all agents. In the Mismatching game, the influence bound is $M$ whilst the Shapley game has influence bound $1 + \beta$.
    \end{remark}
    
    These games were analysed in \cite{kleinberg:nashbarrier} and \cite{ostrovski:payoff} respectively, and it was shown
    that while the given learning algorithm (RD in the former, Fictitious Play in the latter) did not converge to the
    NE, the agents actually received greater payoff through learning than they would have if they had only played the
    equilibrium strategy. The implication is that the agents were better off (in terms of payoff) by not converging to
    an equilibrium.

    In Fig.~\ref*{fig::Performance} we plot the trajectories of QL for varying choices of $T_k$ in each of the games.
    For the sake of simplicity, we enforce that all agents have the same $T_k$ so we drop the $k$ notation. The
    trajectories are displayed on the space $(0, 1)^3$ with each axis corresponding to the
    probability with which each agent plays their first action. Above each figure is displayed the choice of $T$ for
    which Q-Learning is run, as well as the Time-averaged Social Welfare (TSW) along the trajectory, given by (\ref
    {eqn::TSW}).
    
    Two points become immediately clear from Fig.~\ref*{fig::Performance}. The first is that as $T$ is increased, the
    dynamics break no longer cycle around the equilibrium but rather converge to a unique equilibrium. While this
    occurs, however, TSW is decreasing. In fact, even in the case of equilibriation, trajectories which take longer to
    reach the QRE gain a larger TSW. It is clear then, that it is in the agents' benefit if the dynamics remain
    unstable, at least as far as payoff is concerned.

    In Fig.~\ref*{fig::Welfare}, we move beyond these indicative examples by evaluating TSW on 35 randomly generated
    games as $T$ is increased. In order to accurately compare games with differnet payoff functions, we divide $\sum_
    {k} u_k(\x_k(t), \x_{k}(t))$ by the maximum possible cumulative payoff that an agent could receive in the game.
    This ensures that TSW remains within $[-1, 1]$ in all games. It is clear once again that, in general, TSW decreases
    as $T$ increases, i.e. as the game move towards more convergent behaviour. Of course, this does not hold in every
    game. However, the red line, which denotes the mean TSW across all games, suggests that this trend is the expected
    behaviour for a randomly selected game.

     \section{Conclusion}

        Our community has made strong strides in showing that online learning in games does not always reach an
        equilibrium. At the same time, the rising use of multiple interacting agents in machine learning applications
        necessitates placing guarantees on learning. In this paper, we make a step towards resolving this dichotomy by
        considering how the structure of a game, beyond the correlation between agent payoffs, affects online
        learning.

        Specifically, we considered the asymptotic convergence to unique fixed points through Q-Learning (QL). Our
        analysis shows that the convergence in this popular learning dynamic can be guaranteed through sufficient
        exploration on the part of all agents. We also subsume convergence results in co-ordination (potential) games
        and competitive (network zero sum) games for which any positive rate of exploration is required. We then
        consider the impact of convergence through the lens of payoff performance and show that no-regret algorithms
        will outperform the equilibrium in terms of payoff, so long as the time-average trajectory reaches an
        equilibrium. In our experiments we show that this behaviour holds for a large number of games. An interesting
        point for future work would be to develop an analytical understanding for how often non-convergent learning
        dynamics outperform the equilibria of the game. As our study has shown, convergence of dynamics is inextricably linked to exploration. As such, by studying the optimality of non-convergent dynamics, one may assess quantitatively the trade-off between exploration and exploitation.

        \subsection*{Acknowledgements}
        Aamal Hussain and Francesco Belardinelli are partly funded by the UKRI Centre for Doctoral Training in Safe and Trusted Artificial Intelligence (grant number EP/S023356/1). This research/project is supported in part by the National Research Foundation, Singapore and DSO National Laboratories under its AI Singapore Program (AISG Award No: AISG2-RP-2020-016), NRF 2018 Fellowship NRF-NRFF2018-07, NRF2019-NRF-ANR095 ALIAS grant, grant PIESGP-AI-2020-01, AME Programmatic Fund (Grant No.A20H6b0151) from the Agency for Science, Technology and Research (A*STAR) and Provost’s Chair Professorship grant RGEPPV2101.

    \section*{Appendix}

    \subsection*{Results on Learning in Games}

    A fundamental point to be noted of FTRL is that its dynamics evolve in the payoff space. To be able to translate this into a dynamical system on $\Delta$, we
    must consider the relation between $y_k$ and the corresponding state vector $x_k$. We do this through the following Lemma.
    
    \begin{lemma}[\cite{flokas:donotmix} Lemma B.4] \label{lem::steepness}
        $\x_k = Q_k(\y_k)$ if and only if there exist $\mu_k \in \R$ and $\nu_{ki} \in \R_+$ such that, for all $i \in S_k$ the following hold
        \begin{enumerate}
            \item $y_{ki} = \frac{\partial h_k}{\partial x_{ki}} + \mu_k - \nu_{ki}$
            \item $\nu_{ki} x_{ki} = 0$ 
        \end{enumerate}
        In particular, if $h_k$ is steep, then $\nu_{ki} = 0$ for all $i$.
    \end{lemma}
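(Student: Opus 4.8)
The statement is a characterisation of the mirror map $Q_k(\y_k) = \arg\max_{\x_k \in \Delta_k}\{\langle \x_k, \y_k\rangle - h_k(\x_k)\}$ through its first-order optimality conditions, so the plan is to treat it as a convex program and apply Karush--Kuhn--Tucker (KKT) theory. First I would observe that, since $h_k$ is strongly convex by Assumption \ref{ass::regulariser}, the objective $\x_k \mapsto \langle \x_k, \y_k\rangle - h_k(\x_k)$ is strongly concave, while the feasible set $\Delta_k$ is compact and convex; hence the maximiser $Q_k(\y_k)$ exists and is unique. The constraints defining $\Delta_k$ are the single affine equality $\sum_i x_{ki} = 1$ together with the inequalities $x_{ki} \geq 0$, so the problem is a convex program all of whose constraints are affine. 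A constraint qualification therefore holds automatically, which is precisely what makes the KKT conditions both necessary and sufficient for optimality.

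With this set-up I would form the Lagrangian $L(\x_k, \mu_k, \nu_k) = \langle \x_k, \y_k\rangle - h_k(\x_k) - \mu_k(\sum_i x_{ki} - 1) + \sum_i \nu_{ki} x_{ki}$, with $\mu_k \in \R$ the multiplier for the equality constraint and $\nu_{ki} \geq 0$ the multipliers for $x_{ki}\geq 0$. Stationarity $\partial L/\partial x_{ki} = 0$ yields $y_{ki} = \partial h_k/\partial x_{ki} + \mu_k - \nu_{ki}$, which is item 1, and the complementary slackness condition of KKT is exactly $\nu_{ki} x_{ki} = 0$, which is item 2. The forward implication is then the necessity of KKT at the maximiser $\x_k = Q_k(\y_k)$, and the reverse implication is the sufficiency of KKT for this convex program: any $\x_k$ admitting multipliers satisfying items 1--2 together with primal and dual feasibility is the global maximiser, hence equals $Q_k(\y_k)$ by uniqueness. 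Keeping the $\nu_{ki}$ general here accommodates regularisers differentiable up to $\partial\Delta$, for which boundary maximisers with $\nu_{ki} > 0$ may genuinely occur.

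For the final assertion I would use steepness of $h_k$ to rule out boundary maximisers. Suppose $x_{ki} = 0$ for some coordinate $i$ at $\x_k = Q_k(\y_k)$, and pick a coordinate $j$ with $x_{kj} > 0$; consider the feasible inward perturbation that transfers a small mass $t > 0$ from $x_{kj}$ into $x_{ki}$. Evaluating the one-sided derivative of the objective along this direction at $t = 0^+$, the contribution $-\partial h_k/\partial x_{ki}$ diverges to $+\infty$ as the point leaves the face $\{x_{ki} = 0\}$, because steepness forces $\|\nabla h_k\|\to\infty$ at $\partial\Delta$, while the cost of decreasing $x_{kj}$ stays finite since its gradient component is finite. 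Hence the objective strictly increases under an arbitrarily small inward move, contradicting optimality; so every coordinate satisfies $x_{ki} > 0$, i.e. $Q_k(\y_k)$ lies in the relative interior of $\Delta_k$. Complementary slackness $\nu_{ki} x_{ki} = 0$ with $x_{ki} > 0$ then forces $\nu_{ki} = 0$ for all $i$.

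The step I expect to be the genuine obstacle is this last one: $\nabla h_k$ is not finite on $\partial\Delta$, so I cannot simply invoke stationarity with a finite gradient there, and I must instead argue through one-sided directional derivatives and the precise way the blow-up of $\nabla h_k$ is oriented relative to the simplex's feasible directions. For the entropic regulariser $h_k(\x_k) = \sum_i x_{ki}\ln x_{ki}$ this is transparent, since $\partial h_k/\partial x_{ki} = \ln x_{ki} + 1 \to -\infty$ as $x_{ki}\to 0^+$; for a general steep $h_k$ one leans on the theory of Legendre-type functions to guarantee the same inward divergence. The cleanest route is therefore to establish interiority first and only then read off the finite-gradient stationarity condition.
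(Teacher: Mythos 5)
The paper does not actually prove this lemma: it is imported verbatim from the cited reference, and the text immediately below it states that the proof is omitted and can be found there. Your KKT argument is, in substance, exactly the standard proof used in that reference, and it is correct: existence and uniqueness of $Q_k(\y_k)$ from strong concavity plus compactness, necessity and sufficiency of the KKT conditions because the objective is concave and all constraints defining $\Delta_k$ are affine, with item 1 as stationarity and item 2 as complementary slackness.

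The one place where care is genuinely needed is the one you flag yourself: for a steep regulariser, the bare condition $\|\nabla h_k(\x)\| \rightarrow \infty$ near $\partial\Delta_k$ does not by itself orient the blow-up, so your claim that $-\partial h_k/\partial x_{ki}$ diverges to $+\infty$ along the inward perturbation needs an extra ingredient. The ingredient is convexity of $h_k$ together with its finiteness on all of $\Delta_k$: if the gradient component along a direction that stays inside $\Delta_k$ (e.g.\ an edge direction $\pure{i} - \pure{j}$ with both coordinates positive) blew up to $+\infty$, the convexity inequality $h_k(\x + s(\pure{i}-\pure{j})) \geq h_k(\x) + s\,\langle \nabla h_k(\x), \pure{i}-\pure{j} \rangle$ would force $h_k$ to be infinite somewhere on $\Delta_k$. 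Hence the divergence must occur along directions leaving the simplex, which is precisely the statement that inward one-sided directional derivatives at a boundary point equal $-\infty$; equivalently, a steep convex $h_k$ has empty subdifferential on the relative boundary (Rockafellar's essential-smoothness result, the "Legendre-type" theory you invoke), so the maximiser is interior and complementary slackness kills every $\nu_{ki}$. With that step made explicit, your proof is complete and matches the cited source's approach.
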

    
    The proof of this Lemma is in \cite{flokas:donotmix} and so we omit it here. We make use of this lemma in proving Theorem \ref*{thm::FTRLGlobalConv} which on the relation of the Bregman Divergence to trajectories generated by FTRL.

    We begin by considering the \emph{Fenchel coupling} generated by $h_k$ defined by
        \begin{equation*}
            F_k(\x, \y) = h_k(\x) + h_k^*(\y) - \langle \x, \y \rangle
        \end{equation*}
        In \cite{mertikopoulos:finite}, it was shown that, for regularisers who are strongly convex, the Fenchel coupling is a Lyapunov function for FTRL in strictly monotone games, provided the regulariser also satisfies the \emph{reciprocity} condition: for any $\x$ and any sequence $\y_n$
        \begin{equation*}
            Q_k(\y_n) \rightarrow \x \implies F_k(\x, \y_n) \rightarrow 0
        \end{equation*}
        The converse of this statement is already satisfied by the strong convexity of $h_k$. We begin by showing that regularisers which satisfy Assumption \ref{ass::regulariser} satisfy the reciprocity condition.

        \begin{lemma} \label{lem::reciprocity}
             Any regulariser $h_k$ which satisfies \ref{ass::regulariser} also satisfies that for any $\x \in \Delta$ and any sequence $(\y_n)_{n \in \mathbb{N}} \in \R^n$, 
             \begin{equation}
                Q_k(\y_n) \rightarrow \x \iff F_k(\x, \y_n) \rightarrow 0     
             \end{equation}
        \end{lemma}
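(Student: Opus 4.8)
The plan is to treat both implications through a single reduction of the Fenchel coupling to a Bregman divergence, after which each direction falls out of a matching quadratic bound. I would start from the two basic facts about $F_k(\x, \y) = h_k(\x) + h_k^*(\y) - \langle \x, \y \rangle$: by the Fenchel--Young inequality it is nonnegative, and writing $\x_n := Q_k(\y_n)$, the definition of the conjugate gives $h_k^*(\y_n) = \langle \x_n, \y_n \rangle - h_k(\x_n)$, so that $F_k(\x_n, \y_n) = 0$. Subtracting, $F_k(\x, \y_n) = h_k(\x) - h_k(\x_n) - \langle \x - \x_n, \y_n \rangle$. Now I would invoke steepness: since $h_k$ satisfies Assumption \ref{ass::regulariser}, Lemma \ref{lem::steepness} forces $\nu_{ki} = 0$, so $\x_n$ lies in the relative interior of $\Delta_k$ and $\y_n = \nabla h_k(\x_n) + \mu_n \mathbf{1}$ for some $\mu_n \in \R$. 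Because $\x$ and $\x_n$ both sum to one, the term along $\mathbf{1}$ cancels, giving $\langle \x - \x_n, \y_n \rangle = \langle \x - \x_n, \nabla h_k(\x_n) \rangle$ and hence the clean identity $F_k(\x, \y_n) = D_B(\x_n \| \x ; h)$.

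With this identity in hand, both implications reduce to sandwiching the Bregman divergence between multiples of $\|\x - \x_n\|^2$. The strong convexity in Assumption \ref{ass::regulariser}, rearranged, yields the lower bound $D_B(\x_n \| \x ; h) \geq \tfrac{\kappa}{2} \|\x - \x_n\|^2$, so that $F_k(\x, \y_n) \to 0$ forces $\x_n = Q_k(\y_n) \to \x$; this recovers the converse direction already attributed to strong convexity. For the reciprocity direction, the Lipschitz continuity of $\nabla h_k$ in Assumption \ref{ass::regulariser} supplies the standard smoothness estimate $D_B(\x_n \| \x ; h) \leq \tfrac{L_k}{2} \|\x - \x_n\|^2$, which tends to $0$ together with $\|\x - \x_n\|$ whenever $Q_k(\y_n) \to \x$. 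Combining the two bounds gives $F_k(\x, \y_n) \to 0 \iff Q_k(\y_n) \to \x$, as claimed.

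The hard part will be the interaction between steepness and the smoothness bound at the boundary of $\Delta_k$. Steepness ($\|\nabla h_k\| \to \infty$ on $\partial \Delta_k$) is exactly what obstructs a genuinely global Lipschitz constant for $\nabla h_k$ when the limit point $\x$ lies on a face of the simplex: there $\nabla h_k(\x_n)$ diverges while $\x - \x_n$ only vanishes, so the product in $\langle \x - \x_n, \nabla h_k(\x_n) \rangle$ is an indeterminate form that a uniform $L_k$ cannot control directly. I expect to handle this by estimating that inner product on each coordinate rather than through a single constant. For the entropic regulariser $h_k(x_k) = \sum_i x_{ki} \ln x_{ki}$ of primary interest, the identity specialises to $F_k(\x, \y_n) = \sum_i x_i \ln (x_i / x_{n,i})$, and each term vanishes in the limit either because $x_i > 0$ with $x_{n,i} \to x_i$, or because $x_i = 0$ and $t \ln t \to 0$ as $t \to 0^+$; securing this convergence uniformly over coordinates, including those collapsing to the boundary, is where the genuine care of the argument is concentrated.
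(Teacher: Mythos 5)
Your argument is correct, and its engine is the same one the paper runs on: both directions pass through Lemma \ref{lem::steepness}, which under steepness gives $\y_n = \nabla h_k(\x_n) + \mu_n \mathbb{1}$ with the $\mathbb{1}$-component annihilated by $\x - \x_n$ (both points sum to one), and the reciprocity direction is then closed using the Lipschitz constant $L_k$. The packaging, however, differs in two substantive ways, both to your advantage. First, you extract the exact identity $F_k(\x, \y_n) = D_B(\x_n \| \x ; h_k)$ --- a fact the paper proves separately as Lemma \ref{lem::Abe} for a different purpose, but does not use inside this proof --- whereas the paper only derives the one-sided estimate $F_k(\x, \y_n) \leq L_k \|\x_n - \x\|^2$ via convexity of $h_k$ plus Cauchy--Schwarz. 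Second, the identity lets you treat both implications as a symmetric quadratic sandwich, $\tfrac{\kappa}{2}\|\x_n - \x\|^2 \leq F_k(\x, \y_n) \leq \tfrac{L_k}{2}\|\x_n - \x\|^2$, so your converse direction is self-contained via strong convexity, where the paper outsources it to Lemma 4.8 of \cite{mertikopoulos:finite}. What the paper's version buys is brevity in the forward direction; what yours buys is a complete, citation-free proof in which the roles of Assumptions \ref{ass::regulariser}.1 and \ref{ass::regulariser}.3 are displayed as mirror images.

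The difficulty you flag in your closing paragraph is genuine, and it afflicts the paper's proof (indeed Assumption \ref{ass::regulariser} itself) just as much as your own. A map $\nabla h_k$ that is Lipschitz on the bounded set $\Delta_k$ is bounded there, so Assumption \ref{ass::regulariser}.1 and steepness (Assumption \ref{ass::regulariser}.2) cannot hold simultaneously; in particular the entropic regulariser --- the only case the paper actually uses downstream --- is steep but admits no global $L_k$, so both your smoothness bound and the paper's Cauchy--Schwarz step are unjustified when the limit point $\x$ lies on the boundary of $\Delta_k$ (for interior $\x$, a local Lipschitz constant on a compact neighbourhood of $\x$ in the relative interior suffices, since $\x_n$ eventually enters it). Your coordinate-wise repair for entropy is the right one, and it is in fact simpler than you anticipate: $F_k(\x, \y_n) = \sum_i x_i \ln (x_i / x_{n,i})$ has finitely many terms, the terms with $x_i = 0$ vanish identically under the convention $0 \ln 0 = 0$, and each term with $x_i > 0$ tends to $0$ because $x_{n,i} \to x_i > 0$; with $n_k$ finite no uniformity over coordinates is needed, and the converse direction for entropy follows from Pinsker's inequality (strong convexity of entropy in the $\ell_1$ norm). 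So your proof stands wherever the paper's does, and your last paragraph identifies precisely the point at which both require repair.
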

    \begin{proof}
        Define, for any $n \in \mathbb{N}$, $\x_n = Q_k(\y_n) := \arg\max_{\x \in \Delta_k}  \left\{ \langle \x, \y_n \rangle - h_k(\x) \right\}$. 
        For the forward direction:
        \begin{align*}
            F_k(\x, \y_n) & = h_k(\x) + h_k^*(\y_n) - \langle \x, \y_n \rangle \\
            & = h_k(\x) + \left( \langle \x_n, \y_n \rangle - h_k(\x_n) \right) - \langle \x, \y_n \rangle \\
            & = h_k(\x) - h_k(\x_n) + \langle \x_n, \y_n \rangle - \langle \x, \y_n \rangle \\
            & = h_k(\x) - h_k(\x_n) + \langle \y_n, \x_n - \x \rangle
        \end{align*}
        By the convexity of $h_k$
        \begin{equation*}
            \langle \nabla h_k(\x), \x_n - \x \rangle \leq h_k(\x_n) - h_k(\x)
        \end{equation*}
        So that
        \begin{align*}
            F_k(\x, \y_n) & \leq - \langle \nabla h_k(\x), \x_n - \x \rangle + \langle \y_n, \x_n - \x \rangle \\
            & =  \langle \y_n - \nabla h_k(\x), \x_n - \x \rangle
        \end{align*}
        We apply Lemma \ref{lem::steepness} and Assumption \ref{ass::regulariser} to say that, since $\x_n = Q_n(\y_n)$,  $\y_n = \nabla h_k(\x_n) - \mu_k \mathbb{1}^\top$, so we are left with
        \begin{align*}
            F_k(\x, \y_n) & \leq \langle \nabla h_k(\x_n) - \nabla h_k(\x), \x_n - \x \rangle - \mu_k \langle \mathbb{1}, \x_n - \x \rangle \\
            & \leq ||\nabla h_k(\x_n) - \nabla h_k(\x)|| \cdot ||\x_n - \x|| \\
            & \leq L_k ||\x_n - \x||^2
        \end{align*}
        where the second inequality follows from the Cauchy-Schwartz Inequality and the final result follows from Assumption \ref{ass::regulariser}.1. Then, taking the limit as $n \rightarrow \infty$, $||\x_n - \x||^2 \rightarrow 0$ which, alongside the fact that $F_k(\x, \y) \geq 0$ for all $\x, \y$ gives the required result. This last fact follows directly from the Fenchel-Young Inequality.
        
        For the reverse direction, we apply Lemma 4.8 of \cite{mertikopoulos:finite} and Assumption \ref{ass::regulariser}.2.
    \end{proof}

      The advantage of making Assumption \ref{ass::regulariser}.1 is that it allows for the Bregman Divergence to be related to the Fenchel Coupling. This is formalised through the following extension of \cite{abe:MFTRL} Lemma C.1 to an arbitrary number of agents.
 
        \begin{lemma} \label{lem::Abe}
            For any agent $k$ following FTRL with regulariser $h_k$ who satisfies Assumption \ref{ass::regulariser}, 

            \begin{equation*}
                D_B(\x_k|| \x_k(t) ; h_k) = F_k(\x_k, \y_k(t)) =  
                h(\x_k) + h^*_k(\y_k(t))
                -  \langle \x_k, \y_k \rangle
            \end{equation*}
            for any $\x_k \in \Delta_k$
        \end{lemma}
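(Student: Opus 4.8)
The plan is to verify the identity by unpacking both sides from their definitions and reducing the Fenchel coupling to the Bregman divergence, following (and lifting to $N$ agents) the argument of \cite{abe:MFTRL} Lemma C.1. Since the claim is stated per agent $k$, the genuine content is a single-agent identity; the passage to arbitrarily many agents is then obtained for free by taking the weighted sum $\sum_k w_k$ of the per-agent equalities, which reproduces the weighted Bregman divergence and the total Fenchel coupling.

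First I would rewrite the conjugate term. Because $\x_k(t) = Q_k(\y_k(t))$ is by definition the maximiser of $\langle \x_k, \y_k(t)\rangle - h_k(\x_k)$ over $\Delta_k$, this maximum value is exactly $h_k^*(\y_k(t))$, giving $h_k^*(\y_k(t)) = \langle \x_k(t), \y_k(t)\rangle - h_k(\x_k(t))$. Substituting this into the definition $F_k(\x_k, \y_k(t)) = h_k(\x_k) + h_k^*(\y_k(t)) - \langle \x_k, \y_k(t)\rangle$ and collecting terms yields
\[
F_k(\x_k, \y_k(t)) = h_k(\x_k) - h_k(\x_k(t)) + \langle \y_k(t), \x_k(t) - \x_k\rangle .
\]
This is already of Bregman type, except that the linear term is taken against the dual variable $\y_k(t)$ rather than against $\nabla h_k$ evaluated at $\x_k(t)$.

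The crux is to replace $\y_k(t)$ by $\nabla h_k(\x_k(t))$ inside that inner product. For this I would invoke Lemma \ref{lem::steepness}: since $\x_k(t) = Q_k(\y_k(t))$ and $h_k$ is steep by Assumption \ref{ass::regulariser}.2, the complementary-slackness multipliers vanish ($\nu_{ki} = 0$), so $\y_k(t) = \nabla h_k(\x_k(t)) + \mu_k \mathbb{1}$ for some scalar $\mu_k$. The key simplification is that $\x_k$ and $\x_k(t)$ both lie in $\Delta_k$, hence $\langle \mathbb{1}, \x_k(t) - \x_k\rangle = 0$ and the multiplier term drops out. Therefore $\langle \y_k(t), \x_k(t) - \x_k\rangle = \langle \nabla h_k(\x_k(t)), \x_k(t) - \x_k\rangle$, and substituting back recovers precisely the Bregman divergence between $\x_k$ and $\x_k(t)$, as claimed.

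I expect the main obstacle to be the careful use of Lemma \ref{lem::steepness}, which is exactly where steepness is needed: it guarantees that $\x_k(t)$ sits in the relative interior of $\Delta_k$ so that the inequality multipliers $\nu_{ki}$ vanish and the clean gradient representation $\y_k(t) = \nabla h_k(\x_k(t)) + \mu_k \mathbb{1}$ holds. A secondary, essentially bookkeeping, point is to track which argument of the Bregman divergence plays the role of the base point at which $\nabla h_k$ is evaluated, and to confirm that the resulting expression matches the stated $D_B(\x_k \| \x_k(t); h_k)$; the remaining manipulations are elementary linear algebra on the simplex.
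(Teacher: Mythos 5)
Your proposal is correct and follows essentially the same route as the paper's own proof: evaluate $h_k^*(\y_k(t))$ at the maximiser $\x_k(t) = Q_k(\y_k(t))$, then use Lemma \ref{lem::steepness} with steepness to write $\y_k(t) = \nabla h_k(\x_k(t)) + \mu_k \mathbb{1}$ and kill the multiplier term against differences of simplex points. If anything, you are slightly more careful than the paper, which asserts the final cancellation without spelling out that $\langle \mathbb{1}, \x_k - \x_k(t)\rangle = 0$ is what makes the $\mu_k$ term vanish.
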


    \begin{proof}
        Here, we extend Lemma C.1 of \cite{abe:MFTRL} to the general $N$-player game. Here, we recall again that $\x(t) = Q_k(\y(t)) = \arg\max_{\x \in \Delta_k}  \left\{ \langle \x, \y(t) \rangle - h_k(\x) \right\}$
        \begin{align*}
            D_B(\x||\x(t); h_k) - F_k(\x, \y(t)) & = h_k(\x) - h_k(\x(t)) - \langle \nabla h_k(\x(t)), \x - \x(t) \rangle - \left( h_k(\x) + h^*_k(\y(t)) - \langle \y(t), \x \rangle \right) \\
            & = - h_k(\x(t)) - \langle \nabla h_k(\x(t)), \x - \x(t) \rangle - \langle \y(t), \x(t) \rangle + h_k(\x(t)) + \langle \y(t), \x \rangle \\
            & = \langle \nabla h_k(\x(t)) - \y(t), \x - \x(t) \rangle\\
            & = 0
        \end{align*}
        where the final result follows from the fact that $\x(t) = Q_n(\y(t)) \iff \y(t) = \nabla h_k(\x(t)) - \mu_k \mathbb{1}^\top$.
    \end{proof}

    We use this to find, for each agent, the time derivative of the Weighted Bregman Divergence between a trajectory $\x(t)$ generated by FTRL and any other strategy $\y \in \Delta$ in terms of the pseudo-gradient of the game $F$. We formalise this through the following Lemma. 

        \begin{lemma} \label{lem::lyapFTRL}
            In any game $\Gamma$ with pseudo-gradient map $F$, let $\x(t)$ denote the trajectory generated by FTRL with regularisers $(h_k)_{k \in \agentset}$ and let $\y \in \Delta$.
            Then, for any positive set of weights $\w = (w_k)_{k \in \agentset}$. 
            
            \begin{equation} 
                \frac{d}{dt} D_{B}(\y_k , \x_k(t) ; h_k) = - \left( \langle \x_k - \y_k, F_k(\x) - F_k(\y) \rangle + \langle \x_k - \y_k, F_k(\y) \rangle \right)
            \end{equation}
        \end{lemma}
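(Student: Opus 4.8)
The plan is to collapse the left-hand side onto dual variables, differentiate, and then use a single convex-analytic identity to recover the pseudo-gradient. First I would invoke Lemma~\ref{lem::Abe} with first argument $\y_k$ and second argument $\x_k(t) = Q_k(\y_k(t))$, rewriting the divergence entirely in terms of $\y_k(t)$:
$$
D_B(\y_k \| \x_k(t); h_k) = h_k(\y_k) + h_k^*(\y_k(t)) - \langle \y_k, \y_k(t) \rangle.
$$
This is the step that makes the computation tractable: since $\y_k$ is held fixed, the only $t$-dependence now lives in $h_k^*(\y_k(t))$ and in the linear term $\langle \y_k, \y_k(t)\rangle$.

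Next I would differentiate in $t$. From the FTRL update~(\ref{eqn::FTRL}), $\y_k(t) = \y_k(0) + \int_0^t r_k(\x(s))\,ds$ is differentiable with $\dot{\y}_k(t) = r_k(\x(t)) = -F_k(\x(t))$, where $F_k$ now denotes the single-argument pseudo-gradient (not the two-argument Fenchel coupling of Lemma~\ref{lem::Abe}). The chain rule then gives
$$
\frac{d}{dt} D_B(\y_k \| \x_k(t); h_k) = \langle \nabla h_k^*(\y_k(t)), \dot{\y}_k(t)\rangle - \langle \y_k, \dot{\y}_k(t)\rangle.
$$
The crucial identity is $\nabla h_k^*(\y) = Q_k(\y)$, i.e.\ the gradient of the conjugate returns the FTRL choice map. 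I would justify this by the envelope (Danskin) theorem: strong convexity of $h_k$ (Assumption~\ref{ass::regulariser}.3) makes the maximiser defining $h_k^*$ unique, so $h_k^*$ is differentiable with gradient equal to that maximiser, namely $Q_k(\y_k(t)) = \x_k(t)$. Substituting $\nabla h_k^*(\y_k(t)) = \x_k(t)$ collapses the derivative to
$$
\frac{d}{dt} D_B(\y_k \| \x_k(t); h_k) = \langle \x_k(t) - \y_k, \dot{\y}_k(t)\rangle = -\langle \x_k - \y_k, F_k(\x)\rangle.
$$

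Finally I would recover the stated decomposed form by the trivial algebraic split $F_k(\x) = \big(F_k(\x) - F_k(\y)\big) + F_k(\y)$ inside the inner product, yielding exactly
$$
-\langle \x_k - \y_k, F_k(\x)\rangle = -\big(\langle \x_k - \y_k, F_k(\x) - F_k(\y)\rangle + \langle \x_k - \y_k, F_k(\y)\rangle\big).
$$
The weights $\w$ play no role in this per-agent identity; they enter only once one forms $\sum_k w_k$ times this expression, at which point the split is engineered so that the first summed term is controlled by weighted monotonicity (Definition~\ref{def::weightedmon}) and the second, taking $\y = \NE$, by the variational characterisation of the equilibrium (Lemma~\ref{lem::VINE}).

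I expect the only genuine obstacle to be the rigorous justification of $\nabla h_k^*(\y_k(t)) = \x_k(t)$ together with the differentiability of $t \mapsto h_k^*(\y_k(t))$. This rests on Assumption~\ref{ass::regulariser}: strong convexity supplies a unique, continuously varying maximiser, while steepness via Lemma~\ref{lem::steepness} guarantees the maximiser remains in the relative interior (so the KKT multipliers $\nu_{ki}$ vanish and the dual relation $\y_k = \nabla h_k(\x_k) - \mu_k\mathbb{1}$ holds). Once these facts are in place, the remainder is the one-line chain-rule calculation above.
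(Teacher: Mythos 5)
Your proposal is correct and follows essentially the same route as the paper's proof: rewrite the Bregman divergence as the Fenchel coupling (the paper's Lemma~\ref{lem::Abe}), differentiate in $t$ using $\dot{\y}_k(t) = r_k(\x(t))$, apply the duality identity $\nabla h_k^*(\y_k(t)) = Q_k(\y_k(t)) = \x_k(t)$, and finish with the split $F_k(\x) = \bigl(F_k(\x) - F_k(\y)\bigr) + F_k(\y)$. The only cosmetic difference is that you justify the conjugate-gradient identity via Danskin's theorem and strong convexity, whereas the paper simply cites the standard reference (Shalev-Shwartz, eq.~2.13); your justification is, if anything, more self-contained.
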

        
    \begin{proof}
        \begin{align}
            \frac{d}{dt} D_{B}(\NE_k || \x_k(t) ; h_k) &= \frac{d}{dt} \left( h_k(\NE_k) + h_k^*(\y_k(t)) - \langle \y_k(t), \NE_k(t) \rangle \right) \\
            &= \langle \nabla h^*_k(\y(t)), \dot{\y}(t) \rangle - \langle \dot{\y}_k(t), \NE_k \rangle\\
            &= \langle \nabla h^*_k(\y(t)) - \NE_k, \dot{\y}_k(t) \rangle\\
            &= \langle \x_k(t) - \NE_k, r_k(\x(t)) \rangle \label{eqn::Duality}\\
            &=  \langle r_k(\x(t)) - r_k(\NE), \x_k(t) - \NE_k \rangle + \langle r_k(\NE), \x_k(t) - \NE_k \rangle \nonumber \\
            &= - \left( \langle \x_k(t) - \NE_k, F_k(\x(t)) - F_k(\NE) \rangle + \langle \x_k(t) - \NE_k, F_k(\NE) \rangle \right) \nonumber
        \end{align}
        
        where $h^*_k$ is the convex conjugate of $h_k$ and \ref{eqn::Duality} holds due to \cite{shalev:online} (2.13).
        
    \end{proof}

    \begin{proof}[Proof of Theorem \ref{thm::FTRLGlobalConv}]

        Let $w_1, \ldots w_N > 0$ be constants for which $F$ is weighted strictly monotone.
        \begin{align}
            \frac{d}{dt} W_{B}(\NE || \x(t) ; h) &= \sum_k \frac{d}{dt} w_k D_{B}(\NE || \x(t) ; h) \\
            &= - \sum_k \left( \langle \x_k(t) - \NE_k, w_k F_k(\x(t)) - w_k F_k(\NE) \rangle + \langle \x_k(t) - \NE_k, w_k F_k(\NE) \rangle \right) \\
            &= - \left( \langle \x(t) - \NE, F(\x(t); \w) - F(\NE; \w) \rangle + \langle \x(t) - \NE, F(\NE; \w) \rangle \right) \\ & \leq 0
        \end{align}
        
        Applying the fact that $\NE$ is a Equilibrium, Lemma \ref*{lem::VINE} implies that $\langle \x - \NE, F(\NE; \w) \rangle \geq 0$. Since, in addition, $F$ is weighted strictly monotone, the final inequality holds. Equality holds if and only if $\x(t) = \NE$. Thus, $W_{B}(\NE || \x(t) ; h)$ is a strict Lyapunov function for FTRL, converging to $\NE$.
    \end{proof}

     \begin{proof}[Proof of Theorem \ref*{thm::StabilityUnique}]
        Theorem 1 of \cite{melo:qre} implies that the given inequality (\ref{eqn::UniqueQRE}) ensures the strong monotonicity of the perturbed game $\Gamma^H$. In such a
        case, the equilibrium $\NE$ is unique and, from Corollary \ref{corr::RDConvergence}, convergence of RD is guaranteed.
        Applying the relationship between RD and QL, the convergence of QL in $\Gamma$ to the unique QRE $\NE$ follows.
    \end{proof}

    \subsection*{Potential Games and Network Zero Sum Games}
    
    \begin{lemma}\label{lem::QLMonotone}
        If the game $\Gamma$ has a weighted monotone pseudo gradient $F$ then, for
        any $T>0$, the pseudo gradient of the perturbed game $\Gamma^H$ is weighted strictly
        monotone.
    \end{lemma}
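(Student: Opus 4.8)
The plan is to express the pseudo-gradient of $\Gamma^H$ as an explicit perturbation of the pseudo-gradient of $\Gamma$, and then to show that the perturbing map is itself strictly monotone, so that adding it to a merely (weighted) monotone map produces a weighted strictly monotone one. Concretely, write $F$ for the pseudo-gradient of $\Gamma$ and recall from Lemma~\ref{lem::QLRD} that the perturbed rewards are $r^H_{ki}(\x) = r_{ki}(\x) - T_k(\ln x_{ki} + 1)$. Since the pseudo-gradient is $F(\x) = (-r_k(\x))_{k \in \agentset}$, the pseudo-gradient of $\Gamma^H$ is
\begin{equation*}
    F^H(\x) = \left( -r^H_k(\x) \right)_{k \in \agentset} = \left( F_k(\x) + T_k(\ln \x_k + \mathbf{1}) \right)_{k \in \agentset},
\end{equation*}
where $\ln \x_k$ is taken componentwise and $\mathbf{1}$ is the all-ones vector. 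Thus $F^H = F + G$ with $G_k(\x) = T_k(\ln \x_k + \mathbf{1})$.

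Next I would fix the positive per-agent weights $\w = (w_k)_{k \in \agentset}$ that witness the weighted monotonicity of $F$ (cf. Definition~\ref{def::weightedmon}) and use the \emph{same} weights for $F^H$. By bilinearity of the inner product, and because the additive constant $\mathbf{1}$ cancels in the difference $G_k(\x) - G_k(\y)$,
\begin{align*}
    \langle \x - \y, F^H(\x; \w) - F^H(\y; \w) \rangle
    &= \langle \x - \y, F(\x; \w) - F(\y; \w) \rangle \\
    &\quad + \sum_{k \in \agentset} w_k T_k \langle \x_k - \y_k, \ln \x_k - \ln \y_k \rangle.
\end{align*}
The first term is nonnegative for every $\x, \y \in \Delta$ by the assumed weighted monotonicity of $F$, so everything reduces to controlling the logarithmic term.

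The heart of the argument is this second term, which I would expand componentwise as $\langle \x_k - \y_k, \ln \x_k - \ln \y_k \rangle = \sum_{i} (x_{ki} - y_{ki})(\ln x_{ki} - \ln y_{ki})$. Since $\ln$ is strictly increasing, each summand satisfies $(x_{ki} - y_{ki})(\ln x_{ki} - \ln y_{ki}) \geq 0$, with equality if and only if $x_{ki} = y_{ki}$. As every $w_k T_k > 0$, the entire sum is nonnegative, and it is strictly positive whenever $\x \neq \y$: at least one component differs and contributes a strictly positive term, while all remaining terms are nonnegative. Adding this strictly positive quantity to the nonnegative first term gives
\begin{equation*}
    \langle \x - \y, F^H(\x; \w) - F^H(\y; \w) \rangle > 0 \quad \text{for all } \x \neq \y \in \Delta,
\end{equation*}
which is precisely weighted strict monotonicity of $F^H$.

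I expect the only delicate point to be the behaviour of $\ln$ on the boundary of $\Delta$, where a component may vanish and $\ln x_{ki} \to -\infty$. This is harmless for the stated inequality, since the corresponding product stays nonnegative (indeed $+\infty$) by the same monotonicity of $\ln$; moreover the Q-Learning trajectories whose convergence we ultimately care about remain in the relative interior, where each summand is finite. The remaining steps — identifying $F^H = F + G$ and cancelling the constant $\mathbf{1}$ — are immediate from Lemma~\ref{lem::QLRD}, so the substantive content is entirely the elementary scalar inequality for $\ln$.
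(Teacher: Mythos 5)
Your proposal is correct and follows essentially the same route as the paper's proof: decompose $F^H(\cdot;\w) = F(\cdot;\w) + (w_k T_k(\ln \x_k + \mathbf{1}))_k$, note the constant vector cancels in differences, and use strict monotonicity of $\ln$ together with weighted monotonicity of $F$ and $w_k T_k > 0$. Your version is in fact slightly more careful than the paper's, which asserts the conclusion in one line, whereas you spell out the componentwise inequality $(x_{ki}-y_{ki})(\ln x_{ki}-\ln y_{ki}) \geq 0$ with strictness at a differing coordinate and address the boundary behaviour of $\ln$ on $\partial\Delta$.
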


    \begin{proof}
        Let us define $F$ (resp. $F^H$) as the pseudo gradient of $\Gamma$ (resp. $\Gamma^H$).
        We recall that the transformation between rewards in $\Gamma$ and $\Gamma^H$ is given by
        \begin{equation*}
            r_{ki}^H(\x) = r_{ki}(\x) - T_k (\ln x_{ki} + 1)
        \end{equation*}

        As such we have that, for any $\x, \y \in \Delta$ and any weighted $w_1, \ldots, w_N > 0$,
        \begin{align}
            \langle \x - \y, F^H(\x) - F^H(\y; \w) \rangle &= \sum_k \langle \x_k - \y_k, -w_k r_k^H(\x) - (-w_k r_k^H(\y)) \rangle \\
            &= \sum_k \langle \x_k - \y_k, w_k(-r_k(\x) + T_k\,\ln \x_k) - w_k (-r_k(\y) + T_k\,\ln \y_k)\rangle \\
            &= \langle \x - \y, F(\x; \w) - F(\y; \w) \rangle + \sum_k w_k T_k \langle \x - \y, \ln \x - \ln \y \rangle
        \end{align}

        Since: $\ln$ is a strictly monotone operator, $w_k > 0$ and, by assumption, we have that $F$ is weighted monotone, the result holds for any choice of $T_k > 0$.
    \end{proof}

    \begin{proof}[Proof of Theorem \ref*{thm::QLConvergence}]
        If $\Gamma$ has a weighted monotone pseudo-gradient then, by Lemma \ref*{lem::QLMonotone}, the perturbed game $\Gamma^H$ has a weighted strictly monotone pseudo-gradient. By Corollary \ref{corr::RDConvergence}, RD converges in $\Gamma^H$ which, from Lemma \ref*{lem::QLRD} gives convergence of Q-Learning.
    \end{proof}

    The proof of Lemma \ref*{lem::PotentialConv} relies on the following proposition.

    \begin{proposition}
        Let $g \, : \mathcal{X} \rightarrow \mathcal{Y}$ be an operator with derivative $Dg \, : \mathcal{X} \rightarrow \mathcal{Y}$. If $g$ is convex, then $Dg$ is monotone. 
    \end{proposition}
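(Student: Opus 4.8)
The plan is to derive the monotonicity of $Dg$ directly from the first-order characterisation of convexity, which is the standard route for establishing that the gradient of a convex function is a monotone operator. Throughout I read monotonicity of $Dg$ in the sense of Definition \ref{def::weightedmon} with unit weights, i.e. that $\langle x - y, Dg(x) - Dg(y) \rangle \geq 0$ for all $x, y \in \X$, and I treat $g$ as the (scalar) convex potential whose gradient is the operator $Dg$; this is exactly the form in which the proposition is invoked for the potential $U$ in Lemma \ref{lem::PotentialConv}.

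First I would record the gradient inequality for a convex differentiable function: for any $x, y \in \X$,
\begin{equation*}
    g(y) \geq g(x) + \langle Dg(x), y - x \rangle.
\end{equation*}
This is the first-order characterisation of convexity, and it is the only property of $g$ that the argument uses.

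Next I would instantiate this inequality twice, once as written and once with the roles of $x$ and $y$ exchanged, obtaining
\begin{align*}
    g(y) &\geq g(x) + \langle Dg(x), y - x \rangle, \\
    g(x) &\geq g(y) + \langle Dg(y), x - y \rangle.
\end{align*}
Adding the two inequalities cancels $g(x) + g(y)$ from both sides and, after collecting the inner products, leaves
\begin{equation*}
    0 \geq \langle Dg(x), y - x \rangle + \langle Dg(y), x - y \rangle = -\langle x - y, Dg(x) - Dg(y) \rangle.
\end{equation*}
Rearranging yields $\langle x - y, Dg(x) - Dg(y) \rangle \geq 0$, which is precisely the monotonicity of $Dg$.

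There is essentially no hard step here; the result is a textbook fact and the argument collapses to two lines once the gradient inequality is in place. The only point demanding care is the sign bookkeeping: one must track the direction of each inner product when summing the two inequalities so that the final sign emerges as $\geq 0$ rather than $\leq 0$. If one preferred not to invoke the first-order characterisation as a black box, an alternative would be to assume $g$ is twice differentiable and argue from positive semidefiniteness of the Hessian along the segment joining $x$ and $y$; however, the first-order approach is cleaner and requires only differentiability, so I would adopt it.
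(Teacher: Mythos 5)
Your proof is correct and is essentially the paper's own argument: both apply the first-order convexity inequality at the two points with roles exchanged and sum the results to obtain $\langle x - y, Dg(x) - Dg(y) \rangle \geq 0$. The only difference is cosmetic --- the paper wraps the identical computation in a proof by contradiction, whereas you state it directly, which is if anything cleaner.
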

    \begin{proof}
        Suppose for the sake of contradiction that $Dg$ is not monotone. I.e. that, for some $\x, \x' \in \mathcal{X}$
        \begin{equation}
            \langle \x - \x', Dg(\x) - Dg(\x') \rangle < 0. 
        \end{equation}

        From the convexity of $g$, we have that
        \begin{align}
            g(\x) & \geq g(\x') + \langle \x - \x', Dg(\x') \rangle \\
            g(\x') & \geq g(\x) + \langle \x' - \x, Dg(\x) \rangle
        \end{align}

        Taking the sum, we have that
        \begin{align}
            g(\x) + g(\x') & \geq g(\x) + g(\x') - \langle \x - \x', Dg(\x) - Dg(\x') \rangle
        \end{align}

        which is a contradiction.
    \end{proof}

    \begin{lemma} \label{lem::ConcavePotential}
        Any weighted potential game $\Gamma$ with concave potential $
        (\x)$ has a monotone pseudo-gradient $F(\x)$.
    \end{lemma}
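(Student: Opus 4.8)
The plan is to identify the weighted pseudo-gradient of $\Gamma$ with the gradient of the potential $U$, and then invoke the preceding Proposition. First I would establish the key correspondence between rewards and the potential. Fix an agent $k$ and a reference strategy $\y_k \in \Delta_k$. Differentiating the defining identity of a weighted potential game,
\begin{equation*}
    U(\x_k, \x_{-k}) - U(\y_k, \x_{-k}) = w_k \left( u_k(\x_k, \x_{-k}) - u_k(\y_k, \x_{-k}) \right),
\end{equation*}
with respect to $x_{ki}$ (the term involving $\y_k$ being constant in $x_{ki}$) yields $\frac{\partial U}{\partial x_{ki}}(\x) = w_k \frac{\partial u_k}{\partial x_{ki}}(\x) = w_k\, r_{ki}(\x)$ for every action $i \in \actionset{k}$. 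Concatenating over $i$ and over all agents $k$ gives $\nabla U(\x) = (w_k\, r_k(\x))_{k \in \agentset}$.

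With this in hand, recall that the pseudo-gradient is $F(\x) = (-r_k(\x))_{k \in \agentset}$, so the weighted pseudo-gradient with the potential weights $\w = (w_1, \ldots, w_N)$ is
\begin{equation*}
    F(\x; \w) = \left( -w_k\, r_k(\x) \right)_{k \in \agentset} = -\nabla U(\x).
\end{equation*}
Thus the weighted pseudo-gradient is exactly the negative gradient of $U$, or equivalently the gradient of the convex function $g := -U$.

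The final step applies the preceding Proposition to $g = -U$. Since $U$ is concave by hypothesis, $g$ is convex, and the Proposition gives that $Dg = \nabla(-U) = -\nabla U = F(\cdot; \w)$ is monotone. Explicitly, for all $\x, \y \in \Delta$,
\begin{equation*}
    \langle \x - \y,\, F(\x; \w) - F(\y; \w) \rangle = \langle \x - \y,\, \nabla(-U)(\x) - \nabla(-U)(\y) \rangle \geq 0,
\end{equation*}
which is precisely the weighted monotonicity condition of Definition \ref{def::weightedmon} with the weights $w_1, \ldots, w_N$. Hence $F$ is weighted monotone, as claimed.

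I expect the only delicate point to be the first step: justifying the identification $\nabla U = (w_k r_k)_k$ rigorously, since $U$ and the $u_k$ are defined on the product of simplices and differentiation must be understood along admissible tangent directions. Once the reward $r_{ki} = \partial u_k / \partial x_{ki}$ is matched to the corresponding partial derivative of $U$, the remainder is an immediate application of the convexity–monotonicity Proposition, so no further calculation is required.
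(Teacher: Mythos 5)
Your proposal is correct and takes essentially the same route as the paper: both differentiate the weighted-potential identity to get $\nabla U(\x) = (w_k\, r_k(\x))_{k}$, identify the weighted pseudo-gradient $F(\cdot;\w)$ with $-\nabla U$, and then invoke the preceding convexity--monotonicity Proposition applied to the convex function $-U$. Your write-up simply fills in the details (the per-coordinate differentiation and the explicit monotonicity inequality) that the paper states tersely.
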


    \begin{proof}
        By the definition of the weighted potential game, there are positive 
        constants $w_1, \ldots, w_N > 0$ such that
        \begin{equation}
            D_{\x_k} U(\x) = w_k D_{\x_k} u_k(\x) = - w_k F_k(\x).
        \end{equation}

        Since $U(\x)$ is concave and $w_k > 0$, $ - D_{\x_k} u_k(\x)$ is monotone. Taking the sum 
        over all agents $k$, we achieve the required result.
    \end{proof}

    \begin{proof}[Proof of Lemma \ref*{lem::PotentialConv}]
        By Lemma \ref*{lem::ConcavePotential} we have that, if the potential is concave (resp.
        strictly concave), then $F(\x)$ is monotone (resp. strictly monotone). In the concave case,
        we have from Lemma \ref*{lem::QLMonotone} that the perturbed game has a strictly monotone
        $F(\x)$. Then, from Theorem \ref*{thm::QLConvergence} we have convergence of Q-Learning.
    \end{proof}

    \begin{proposition} \label{prop::NZSGProp}
        For any weighted polymatrix zero sum game $\Gamma$,
        \begin{equation}
            \sum_k w_k \langle \x_k', \sum_{(k, l \in \edgeset) }A^{kl}\x_l \rangle + \sum_k \langle \x_k, \sum_{(k, l \in \edgeset) }A^{kl}\x_l' \rangle = 0
        \end{equation}
        
        for any $\x, \x' \in \Delta$
    \end{proposition}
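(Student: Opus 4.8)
The plan is to prove this identity by \emph{polarizing} the weighted zero-sum condition, exploiting the fact that each payoff term $\langle \x_k, A^{kl} \x_l \rangle$ is bilinear in the pair $(\x_k, \x_l)$. Write the weighted total payoff as the quadratic form
\begin{equation*}
f(\x) := \sum_{k \in \agentset} w_k \sum_{(k,l) \in \edgeset} \langle \x_k, A^{kl} \x_l \rangle,
\end{equation*}
which, by the definition of a weighted zero-sum network polymatrix game, satisfies $f(\x) = 0$ for every $\x \in \Delta$. The proposition asserts that the associated symmetric bilinear form also vanishes, and polarization is the natural tool for passing from the quadratic form to its bilinear companion.

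First I would fix arbitrary $\x, \x' \in \Delta$ and form their midpoint $\tfrac{1}{2}(\x + \x')$. Since each $\Delta_k$ is convex, the midpoint again lies in $\Delta$, so the zero-sum condition applies and yields $f\!\left(\tfrac{1}{2}(\x+\x')\right) = 0$. Next I would expand this expression: using bilinearity of each inner product $\langle \cdot, A^{kl} \cdot \rangle$, the term $\langle \x_k + \x_k', A^{kl}(\x_l + \x_l') \rangle$ splits into four pieces, so that
\begin{equation*}
4\, f\!\left(\tfrac{1}{2}(\x+\x')\right) = f(\x) + f(\x') + \sum_k w_k \sum_{(k,l) \in \edgeset} \langle \x_k, A^{kl} \x_l' \rangle + \sum_k w_k \sum_{(k,l) \in \edgeset} \langle \x_k', A^{kl} \x_l \rangle.
\end{equation*}
The pure terms $f(\x)$ and $f(\x')$ both vanish by the zero-sum condition, and the left-hand side vanishes as well, leaving precisely the two mixed (weighted) sums appearing in the statement; rearranging gives the claimed identity.

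I do not expect a serious obstacle here, since the argument is essentially algebraic. The one point that deserves care --- and the reason the midpoint appears --- is that the difference $\x - \x'$ is not itself a product of probability vectors, so the zero-sum hypothesis cannot be applied to it directly; it must be fed a genuine element of $\Delta$, which is why I take the convex combination $\tfrac{1}{2}(\x+\x')$ and then rely on bilinearity to separate out the cross terms. The remaining work is purely bookkeeping: keeping each weight $w_k$ attached to the correct summand and respecting that each edge $(k,l) \in \edgeset$ couples only the strategies of agents $k$ and $l$. An equivalent route, if one prefers, is to restrict $f$ to the segment $t \mapsto f(\x + t(\x' - \x))$, observe that this degree-two polynomial vanishes identically on $[0,1]$, and read off its linear coefficient; this produces the same identity and makes the polarization structure explicit. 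This identity is exactly what is needed downstream, since substituting $\x' = \y$ shows that the cross terms in $\langle \x - \y, F(\x;\w) - F(\y;\w) \rangle$ cancel, establishing that weighted zero-sum polymatrix games are weighted monotone.
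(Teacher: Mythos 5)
Your proof is correct, but it takes a genuinely different route from the paper's. You polarize the quadratic form $f(\x) = \sum_k w_k \sum_{(k,l) \in \edgeset} \langle \x_k, A^{kl}\x_l \rangle$: since $f$ vanishes identically on $\Delta$ and $\Delta$ is convex, evaluating at the midpoint $\tfrac{1}{2}(\x + \x')$ and expanding by bilinearity isolates exactly the two mixed sums, which must therefore vanish. (Note the statement as printed omits a $w_k$ in the second sum; your argument proves the intended weighted version, which is what the paper actually uses in the proof of Lemma \ref{lem::NZSGConv}.) The paper instead follows \cite{kadan:exponential} (Lemma 1): it applies the weighted zero-sum condition to the \emph{hybrid profiles} $(\y_k, \x_{-k})$ in which a single agent $k$ unilaterally deviates, then sums the resulting identities over $k$ and regroups the edge sums combinatorially to land on $-\sum_k w_k u_k(\y_k, \x_{-k}) = \sum_k w_k u_k(\x_k, \y_{-k})$. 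The two arguments consume the hypothesis at different points of $\Delta$: yours at three profiles ($\x$, $\x'$, and their midpoint), the paper's at the $N$ unilateral-deviation profiles. Your polarization argument is more self-contained and arguably cleaner --- it is the standard quadratic-to-bilinear passage and requires no bookkeeping over $\edgeset_{-k}$ --- while the paper's derivation mirrors the established structure of network zero-sum arguments in the literature it cites and produces the deviation-payoff identity in the form most directly reusable there. Your closing remark about how the identity feeds into the monotonicity computation matches the paper's subsequent use exactly.
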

    \begin{proof}
        This proposition follows directly from \cite{kadan:exponential} (Lemma 1) which considers general payoffs in network zero sum games and is also an adjustment of \cite{piliouras:zerosum} (Lemma 4.3) which considers $\x'$ to specifically be the QRE $\NE$. For the sake of completeness, however, we reproduce the proof by \cite{kadan:exponential} here
        
        \begin{align}
            - w_k u_k(\y_k, \x_{-k}) &= \sum_{j \in \agentset_{-k}} w_j \left( u_{jk}(\x_j, \y_k) + \sum_{(j, l) \in \edgeset_{-k}} u_{jl}(\x_j, \x_l) \right) \nonumber \\
            - \sum_k w_k u_k(\y_k, \x_{-k}) &= \sum_k \sum_{j \in \agentset_{-k}} w_j \left( u_{jk}(\x_j, \y_k) + \sum_{(j, l) \in \edgeset_{-k}} u_{jl}(\x_j, \x_l) \right) \nonumber \\
             &= \sum_k \sum_{j \in \agentset_{-k}} w_j u_{jk}(\x_j, \y_k) + \sum_k \sum_{j \in \agentset_{-k}} \sum_{(j, l) \in \edgeset_{-k}} w_j u_{jl}(\x_j, \x_l) \nonumber \\
             &= \sum_k w_k u_{k}(\x_k, \y_{-k})
        \end{align}
        
        from which the result follows.
        
    \end{proof} 






    \begin{proof}[Proof of Lemma \ref*{lem::NZSGConv}]
        For a weighted zero-sum polymatrix game, for any $\x, \x' \in \Delta$
        \begin{align*}
            - \langle \x - \x', F(\x; \w) - F(\x'; \w) \rangle &= \sum_k w_k \langle \x_k - \x_k', \sum_{(k, l \in \edgeset) }A^{kl}\x_l - \sum_{(k, l \in \edgeset) }A^{kl}\x_l'  \rangle \\
            &= \sum_k w_k \langle \x_k', \sum_{(k, l \in \edgeset) }A^{kl}\x_l' \rangle + \sum_k w_k \langle \x_k, \sum_{(k, l \in \edgeset) }A^{kl}\x_l \rangle \\ & + \sum_k w_k \langle \x_k', \sum_{(k, l \in \edgeset) }A^{kl}\x_l \rangle +  \sum_k w_k \langle \x_k, \sum_{(k, l \in \edgeset) }A^{kl}\x_l' \rangle \\
            &= 0
        \end{align*}
        
        where the first two terms of the final equality are zero due to the weighted zero sum property and the final two are zero due to Prop. \ref{prop::NZSGProp}. Therefore, the game is weighted monotone. Convergence to a unique QRE holds due to Theorem \ref*{thm::QLConvergence}.
    \end{proof}

    \bibliographystyle{ieeetr}
    \bibliography{references}

\begin{thebibliography}{10}

\bibitem{piliouras:cycles}
P.~Mertikopoulos, C.~Papadimitriou, and G.~Piliouras, ``{Cycles in adversarial
  regularized learning},'' {\em Proceedings}, pp.~2703--2717, 2018.

\bibitem{piliouras:poincare}
A.~Czechowski and G.~Piliouras, ``{Poincar{\'{e}}-Bendixson Limit Sets in
  Multi-Agent Learning; Poincar{\'{e}}-Bendixson Limit Sets in Multi-Agent
  Learning},'' in {\em International Conference on Autonomous Agents and
  Multiagent Systems}, 2022.

\bibitem{galla:cycles}
T.~Galla, ``{Cycles of cooperation and defection in imperfect learning},'' {\em
  Journal of Statistical Mechanics: Theory and Experiment}, vol.~2011, 8 2011.

\bibitem{sato:qlearning}
Y.~Sato and J.~P. Crutchfield, ``{Coupled replicator equations for the dynamics
  of learning in multiagent systems},'' {\em Physical Review E}, vol.~67,
  p.~015206, 1 2003.

\bibitem{sato:rps}
Y.~Sato, E.~Akiyama, and J.~D. Farmer, ``{Chaos in learning a simple two-person
  game},'' {\em Proceedings of the National Academy of Sciences of the United
  States of America}, vol.~99, pp.~4748--4751, 4 2002.

\bibitem{galla:complex}
T.~Galla and J.~D. Farmer, ``{Complex dynamics in learning complicated
  games},'' {\em Proceedings of the National Academy of Sciences of the United
  States of America}, vol.~110, no.~4, pp.~1232--1236, 2013.

\bibitem{sanders:chaos}
J.~B.~T. Sanders, J.~D. Farmer, and T.~Galla, ``{The prevalence of chaotic
  dynamics in games with many players},'' {\em Scientific Reports}, vol.~8,
  no.~1, p.~4902, 2018.

\bibitem{piliouras:arbitrarilycomplex}
G.~P. Andrade, R.~Frongillo, M.~Belkin, and S.~Kpotufe, ``{Learning in Matrix
  Games can be Arbitrarily Complex},'' 7 2021.

\bibitem{che:gan}
T.~Che, Y.~Li, A.~Paul~Jacob, Y.~Bengio, and W.~Li, ``{Mode Regularized
  Generative Adversarial Networks},'' in {\em International Conference on
  Learning Representations}, 7 2017.

\bibitem{hoang:mgan}
Q.~Hoang, T.~D. Nguyen, T.~Le, and D.~Phung, ``{MGAN: Training Generative
  Adversarial Nets with Multiple Generators},'' in {\em International
  Conference on Learning Representations}, 2 2018.

\bibitem{hamann:swarm}
H.~Hamann, {\em {Swarm Robotics: A Formal Approach}}.
\newblock Springer International Publishing, 2018.

\bibitem{pangallo:taxonomy}
M.~Pangallo, J.~B. Sanders, T.~Galla, and J.~D. Farmer, ``{Towards a taxonomy
  of learning dynamics in 2 × 2 games},'' {\em Games and Economic Behavior},
  vol.~132, pp.~1--21, 3 2022.

\bibitem{metrick:fp}
A.~I. Metrick and B.~Polak, ``{Fictitious play in 2 • 2 games: a geometric
  proof of convergence*},'' {\em Econ. Theory}, vol.~4, pp.~923--933, 1994.

\bibitem{galstyan:2x2}
A.~Kianercy and A.~Galstyan, ``{Dynamics of Boltzmann Q learning in two-player
  two-action games},'' {\em Physical Review E - Statistical, Nonlinear, and
  Soft Matter Physics}, vol.~85, p.~041145, 4 2012.

\bibitem{shoham:mas}
Y.~Shoham and K.~Leyton-Brown, {\em {Multiagent Systems: Algorithmic,
  Game-Theoretic, and Logical Foundations}}.
\newblock Cambridge University Press, 2008.

\bibitem{sutton:barto}
R.~Sutton and A.~Barto, {\em {Reinforcement Learning: An Introduction}}.
\newblock MIT Press, 2018.

\bibitem{tuyls:qlearning}
K.~Tuyls, P.~J. T~Hoen, and B.~Vanschoenwinkel, ``{An evolutionary dynamical
  analysis of multi-agent learning in iterated games},'' 1 2006.

\bibitem{smith:replicator}
J.~Maynard~Smith, ``{The theory of games and the evolution of animal
  conflicts},'' {\em Journal of Theoretical Biology}, vol.~47, pp.~209--221, 9
  1974.

\bibitem{hofbauer:egd}
J.~Hofbauer and K.~Sigmund, ``{Evolutionary Game Dynamics},'' {\em BULLETIN
  (New Series) OF THE AMERICAN MATHEMATICAL SOCIETY}, vol.~40, no.~4,
  pp.~479--519, 2003.

\bibitem{hofbauer:book}
J.~Hofbauer and K.~Sigmund, {\em {Evolutionary Games and Population Dynamics}}.
\newblock Cambridge University Press, 5 1998.

\bibitem{harris:fp}
C.~Harris, ``{On the Rate of Convergence of Continuous-Time Fictitious Play},''
  {\em Games and Economic Behavior}, vol.~22, pp.~238--259, 2 1998.

\bibitem{piliouras:potential}
S.~Leonardos and G.~Piliouras, ``{Exploration-exploitation in multi-agent
  learning: Catastrophe theory meets game theory},'' {\em Artificial
  Intelligence}, vol.~304, p.~103653, 2022.

\bibitem{ewerhart:fp}
C.~Ewerhart and K.~Valkanova, ``{Fictitious play in networks},'' {\em Games and
  Economic Behavior}, vol.~123, pp.~182--206, 9 2020.

\bibitem{piliouras:zerosum}
S.~Leonardos, G.~Piliouras, and K.~Spendlove, ``{Exploration-Exploitation in
  Multi-Agent Competition: Convergence with Bounded Rationality},'' {\em
  Advances in Neural Information Processing Systems}, vol.~34,
  pp.~26318--26331, 12 2021.

\bibitem{pangallo:bestreply}
M.~Pangallo, T.~Heinrich, and J.~D. Farmer, ``{Best reply structure and
  equilibrium convergence in generic games},'' {\em Science Advances}, vol.~5,
  2 2019.

\bibitem{parise:network}
F.~Parise and A.~Ozdaglar, ``{A variational inequality framework for network
  games: Existence, uniqueness, convergence and sensitivity analysis},'' {\em
  Games and Economic Behavior}, vol.~114, pp.~47--82, 3 2019.

\bibitem{kadan:exponential}
A.~Kadan and H.~Fu, ``{Exponential Convergence of Gradient Methods in Concave
  Network Zero-Sum Games},'' {\em Lecture Notes in Computer Science (including
  subseries Lecture Notes in Artificial Intelligence and Lecture Notes in
  Bioinformatics)}, vol.~12458 LNAI, pp.~19--34, 2021.

\bibitem{tatarenko:monotone}
T.~Tatarenko and M.~Kamgarpour, ``{Learning Nash Equilibria in Monotone
  Games},'' {\em Proceedings of the IEEE Conference on Decision and Control},
  vol.~2019-December, pp.~3104--3109, 12 2019.

\bibitem{melo:network}
E.~Melo, ``{A Variational Approach to Network Games},'' {\em SSRN Electronic
  Journal}, 11 2018.

\bibitem{melo:qre}
E.~Melo, ``{On the Uniqueness of Quantal Response Equilibria and Its
  Application to Network Games},'' {\em SSRN Electronic Journal}, 6 2021.

\bibitem{facchinei:VI}
F.~Facchinei and J.~S. Pang, ``{Finite-Dimensional Variational Inequalities and
  Complementarity Problems},'' {\em Finite-Dimensional Variational Inequalities
  and Complementarity Problems}, 2004.

\bibitem{mertikopoulos:concave}
P.~Mertikopoulos and Z.~Zhou, ``{Learning in games with continuous action sets
  and unknown payoff functions},'' {\em Mathematical Programming}, vol.~173,
  pp.~465--507, 2019.

\bibitem{tatarenko:nonstrict}
T.~Tatarenko and M.~Kamgarpour, ``{Bandit Learning in Convex Non-Strictly
  Monotone Games},'' {\em arXiv e-prints}, p.~arXiv:2009.04258, 9 2020.

\bibitem{mertikopoulos:robust}
P.~Coucheney, B.~Gaujal, and P.~Mertikopoulos, ``{Penalty-Regulated Dynamics
  and Robust Learning Procedures in Games},'' {\em
  https://doi.org/10.1287/moor.2014.0687}, vol.~40, pp.~611--633, 11 2014.

\bibitem{mertikopoulos:power-management}
Z.~Zhou, P.~Mertikopoulos, A.~L. Moustakas, N.~Bambos, and P.~Glynn, ``{Robust
  Power Management via Learning and Game Design},'' {\em Operations Research},
  vol.~69, no.~1, pp.~331--345, 2021.

\bibitem{mertikopoulos:gradient-free}
A.~H{\'{e}}liou, P.~Mertikopoulos, and Z.~Zhou, ``{Gradient-Free Online
  Learning in Games with Delayed Rewards},'' in {\em Proceedings of the 37th
  International Conference on Machine Learning}, ICML'20, JMLR.org, 2020.

\bibitem{flokas:donotmix}
E.-V. Vlatakis-Gkaragkounis, L.~Flokas, T.~Lianeas, P.~Mertikopoulos, and
  G.~Piliouras, ``{No-Regret Learning and Mixed Nash Equilibria: They Do Not
  Mix},'' {\em Advances in Neural Information Processing Systems}, vol.~33,
  pp.~1380--1391, 2020.

\bibitem{mertikopoulos:finite}
S.~Hadikhanloo, R.~Laraki, P.~Mertikopoulos, and S.~Sorin, ``{Learning in
  nonatomic games part I Finite action spaces and population games},'' {\em
  Journal of Dynamics and Games. 2022}, vol.~0, no.~0, p.~0, 2022.

\bibitem{mertikopoulos:reinforcement}
P.~Mertikopoulos and W.~H. Sandholm, ``{Learning in Games via Reinforcement and
  Regularization},'' {\em https://doi.org/10.1287/moor.2016.0778}, vol.~41,
  pp.~1297--1324, 8 2016.

\bibitem{camerer:bgt}
C.~F. Camerer, T.~H. Ho, and J.~K. Chong, ``{Behavioural game theory: Thinking,
  learning and teaching},'' {\em Advances in Understanding Strategic Behaviour:
  Game Theory, Experiments and Bounded Rationality}, pp.~120--180, 1 2004.

\bibitem{camerer:ewa}
C.~Camerer and T.~H. Ho, ``{Experience-weighted attraction learning in normal
  form games},'' {\em Econometrica}, vol.~67, pp.~827--874, 7 1999.

\bibitem{shalev:online}
S.~Shalev-Shwartz, ``{Online Learning and Online Convex Optimization},'' {\em
  Foundations and Trends in Machine Learning}, vol.~4, no.~2, 2011.

\bibitem{sorin:composite}
S.~Sorin and C.~Wan, ``{Finite composite games: Equilibria and dynamics},''
  {\em Journal of Dynamics and Games}, vol.~3, no.~1, pp.~101--120, 2016.

\bibitem{ostrovski:payoff}
G.~Ostrovski and S.~van Strien, ``{Payoff performance of fictitious play},''
  {\em Journal of Dynamics and Games}, vol.~1, pp.~621--638, 8 2014.

\bibitem{kleinberg:nashbarrier}
R.~Kleinberg, K.~Ligett, G.~Piliouras, and E.~Tardos, ``{Beyond the Nash
  Equilibrium Barrier},'' {\em Innovations in Computer Science}, 2011.

\bibitem{anagnostides:last-iterate}
I.~Anagnostides, I.~Panageas, G.~Farina, and T.~Sandholm, ``{On Last-Iterate
  Convergence Beyond Zero-Sum Games},'' in {\em Proceedings of the 39th
  International Conference on Machine Learning} (K.~Chaudhuri, S.~Jegelka,
  L.~Song, C.~Szepesvari, G.~Niu, and S.~Sabato, eds.), vol.~162 of {\em
  Proceedings of Machine Learning Research}, pp.~536--581, PMLR, 8 2022.

\bibitem{hofbauer:zerosum}
J.~Hofbauer and S.~Sorin, ``{Best response dynamics for continuous zero--sum
  games},'' {\em Discrete and Continuous Dynamical Systems - B. 2006, Volume 6,
  Pages 215-224}, vol.~6, p.~215, 10 2005.

\bibitem{shapley:twoperson}
L.~S. Shapley, ``{Some Topics in Two-Person Games},'' in {\em Advances in Game
  Theory. (AM-52)}, pp.~1--28, Princeton University Press, 5 2016.

\bibitem{abe:MFTRL}
K.~Abe, M.~Sakamoto, and A.~Iwasaki, ``{Mutation-Driven Follow the Regularized
  Leader for Last-Iterate Convergence in Zero-Sum Games},'' in {\em Conference
  on Uncertainty in Artificial Intelligence}, 2022.

\end{thebibliography}
    
\end{document}